\documentclass[aps,pra,twocolumn,superscriptaddress,showpacs,amsmath,amssymb]{revtex4-2}

\usepackage{graphicx}
\usepackage{bm}
\usepackage{braket}
\usepackage{booktabs}
\usepackage{amsthm}
\usepackage{hyperref}

\newtheorem{theorem}{Theorem}
\newtheorem{proposition}{Proposition}
\newtheorem{lemma}{Lemma}
\theoremstyle{remark}
\newtheorem{remark}{Remark}

\newcommand{\Dket}{\ket{D_N^{(2)}}}
\newcommand{\R}{\mathbb{R}}
\newcommand{\Z}{\mathbb{Z}}
\newcommand{\Heff}{H_{\mathrm{eff}}}

\begin{document}

\title{Perfect State Transfer from a Localised Two-Excitation State\\
to a Dicke State via Static Spin-Network Hamiltonians}

\author{Soumyojyoti Dutta}
\affiliation{A.~P.~Shah Institute of Technology, Ghodbunder Road,
Kasarvadavali, Thane West, Thane, Maharashtra 400615, India}
\email{m24iqt014@alumni.iitj.ac.in}

\date{September 9, 2026}

\begin{abstract}
I construct a family of time-independent, excitation-preserving spin Hamiltonians
that realises perfect state transfer from a localised two-excitation state to the
symmetric two-excitation Dicke state, for every system size $N \ge 4$.
The Hamiltonian has the physical form
$H=\sum_{i<j}J_{ij}(\sigma_i^+\sigma_j^-+\sigma_j^+\sigma_i^-)+\sum_i\epsilon_i n_i$
with real couplings, and satisfies
$e^{-iHt}\ket{110\cdots0}=e^{i\phi}\Dket$ at a finite time.
The construction exploits an $S_{N-2}$ permutation symmetry acting on the initially
unoccupied spins, which reduces the dynamics to a four-dimensional invariant subspace.
Requiring $\tfrac12(\ket{\psi_0}+\Dket)$ to be a zero eigenvector determines the
on-site energies in closed form and leaves three coupling parameters free.
The remaining inverse spectral problem reduces to two polynomial equations in two
dimensionless coupling ratios; eliminating one ratio yields a degree-six reciprocal
polynomial, which the substitution $z=x+x^{-1}$ converts to a cubic.
I then fix the spectral family to $(-n,-1,1)$ with $n$ an odd integer. Factorising
the leading coefficient of the cubic and evaluating it at the boundary $z=-2$ shows
that for every $N\ge4$ some odd $n$ is large enough to produce a real root below
$-2$, and a subresultant supplies a real lift of that root back to the original
system. No numerical optimisation enters the existence argument, which is symbolic
throughout; I propagate the resulting Hamiltonians only to check the algebra.
The coefficient in question happens to contain no odd powers of $n$, so the required
$n$ comes with an explicit threshold. The construction can then be costed: couplings
grow as $N^{1/2}$ and the on-site range as $N^{3/2}$, the transfer time stays within
roughly a factor of three of the Mandelstam--Tamm limit at every size, and the
fidelity proves sensitive to systematic drift of the spectator--spectator coupling
class while tolerating independent bond disorder, which self-averages.
The result is a constrained analogue of perfect state transfer: unlike the general
real-state problem, where an unconstrained real symmetric matrix suffices, here the
Hamiltonian is required to arise from an excitation-preserving spin-network form.
\end{abstract}

\maketitle

\section{Introduction}

Perfect state transfer (PST) in quantum spin networks concerns the existence of a
fixed Hamiltonian whose evolution carries an initial state exactly to a prescribed
final state at a finite time. Engineered static interactions are the central
mechanism, and the theory of PST in engineered spin chains is well
developed~\cite{Christandl2004,Kay2010}.

From the spectral point of view, exact transfer between real states is also well
understood. Strong cospectrality and its generalisations provide the appropriate
language for characterising the eigenspace projections of the states
involved~\cite{Godsil2012,Coutinho2014}; an analogous characterisation for
discrete-time walks appears in Ref.~\cite{ChanZhan2023}. Recent work has extended this
framework to arbitrary real pure states; in particular, for a pair of real pure states
one may construct a real symmetric matrix realising perfect transfer between
them~\cite{GKM2025}.

That result does not settle the physical question asked here. An unconstrained
construction returns a matrix that is generically dense and carries no reason to have
the excitation-preserving hopping structure
\begin{equation}
H=\sum_{i<j}J_{ij}\bigl(\sigma_i^+\sigma_j^-+\sigma_j^+\sigma_i^-\bigr)
 +\sum_i \epsilon_i n_i ,
\label{eq:model}
\end{equation}
or to respect the coupling and symmetry constraints a spin network imposes. The
point is easy to make concrete. Given real unit vectors $e_0,v$ and any $t$, the
Householder reflection $R=\mathbb{1}-2ww^{T}/\|w\|^{2}$ with $w=e_0-v$ already sends
$e_0$ to $v$, and $H=(\pi/t)(\mathbb{1}-R)/2$ gives $e^{-iHt}=R$. That $H$ is dense
and will not usually take the form~\eqref{eq:model}. So the open question is whether a
\emph{physically constrained} excitation-preserving Hamiltonian can do the same job,
not whether some real symmetric matrix can.

I answer this affirmatively for the transfer
\begin{equation}
\ket{110\cdots0}\;\longrightarrow\;\Dket ,\qquad
\Dket=\binom{N}{2}^{-1/2}\!\!\sum_{1\le i<j\le N}\!\!\ket{1_i1_j} .
\end{equation}
The initial state is deliberately \emph{not} permutation symmetric: its two
excitations sit on two distinguished sites, while the target is fully symmetric. That
asymmetry is what separates the problem from Krawtchouk and $\mathrm{SU}(2)$
constructions (Sec.~\ref{sec:krawtchouk}), where both states live inside a single
symmetric representation. What survives here is only $S_{N-2}$ acting on the
initially unoccupied sites, and having just that much symmetry turns out to be the
useful case: enough to reduce the problem, not so much that the two occupied sites
become interchangeable.

\begin{figure*}[t]
\centering
\includegraphics[width=\textwidth]{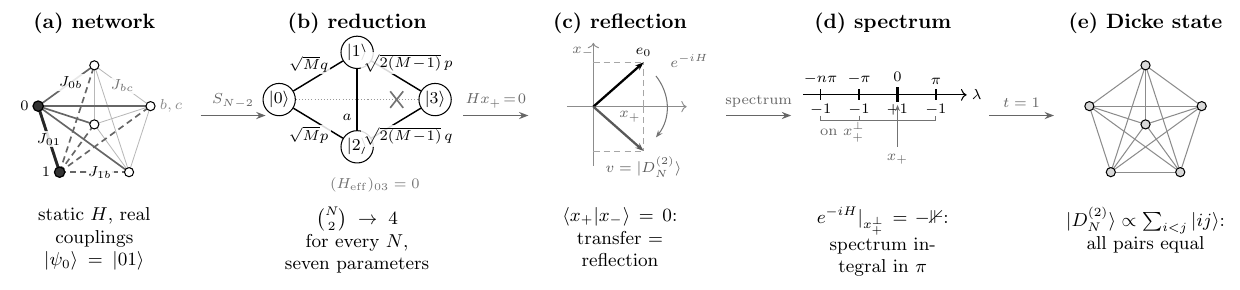}
\caption{\label{fig:concept}%
The construction, end to end. (a)~$N$ spins with real excitation-preserving
couplings and on-site energies, starting from $\ket{\psi_0}=\ket{01}$, which puts
both excitations on a single bond. (b)~$S_{N-2}$ acting on the $M=N-2$ initially
unoccupied spins leaves a four-dimensional invariant subspace that holds both
$\ket{\psi_0}$ and $\Dket$, so the $\binom N2$-dimensional problem becomes a
$4\times4$ one at every $N$. The entry $(\Heff)_{03}$ vanishes because $\ket{01}$ and
$\ket{bc}$ differ by two hops. (c)~Writing $e_0=x_++x_-$ and $v=x_+-x_-$ with
$\braket{x_+|x_-}=0$ turns the transfer into a reflection in the $x_+$ axis.
(d)~That reflection comes from giving $e^{-iH}$ eigenvalue $+1$ on $x_+$ and $-1$ on
all of $x_+^\perp$, which forces an integral spectrum in units of $\pi$ and so imposes
two polynomial conditions on the couplings. (e)~At $t=1$ every pair of spins carries
the same amplitude. These panels are schematic; the numbers live in
Figs.~\ref{fig:reduction},~\ref{fig:scaling},~\ref{fig:dynamics}
and~\ref{fig:disorder}.}
\end{figure*}

\section{Model and notation}

I consider $N$ qubits with the Hamiltonian~\eqref{eq:model},
$J_{ij},\epsilon_i\in\R$, $n_i=\sigma_i^+\sigma_i^-$. Since
$[H,\hat N]=0$ with $\hat N=\sum_i n_i$, the dynamics decomposes into sectors of
fixed Hamming weight. We work in the two-excitation sector $\mathcal H_2$,
$\dim\mathcal H_2=\binom N2$, with
\begin{equation}
\ket{\psi_0}=\ket{110\cdots0},\qquad
\Dket=\binom N2^{-1/2}\sum_{i<j}\ket{ij},
\end{equation}
and I seek
\begin{equation}
e^{-iHt}\ket{\psi_0}=e^{i\phi}\Dket .
\label{eq:goal}
\end{equation}
Rescaling $H$ is equivalent to inversely rescaling $t$, so I set $t=1$ throughout;
$t$ is recovered at the end by rescaling all couplings.

\subsection{The symmetric family}

Label the initially occupied sites $0,1$ and set
\begin{equation}
B=\{2,\dots,N-1\},\qquad M=N-2 .
\end{equation}
We impose invariance under $S_M=S_{N-2}$ permuting $B$. The most general
Hamiltonian of the form~\eqref{eq:model} with this symmetry has \emph{seven}
parameters, independent of $N$:
\begin{equation}
\begin{aligned}
a&=J_{01}, &
p&=J_{0b}\ \ (b\in B), &
q&=J_{1b}\ \ (b\in B),\\
w&=J_{bc}\ \ (b\ne c\in B), &
\epsilon_0,\ \epsilon_1,\ \epsilon_b&\ \ (b\in B). &&
\end{aligned}
\label{eq:params}
\end{equation}
Nothing here is geometrically local. The interaction graph is a weighted complete
graph sorted into the symmetry classes above, with a single intra-$B$ coupling $w$.
The constraints that do the work are excitation-number conservation and the
seven-parameter $S_{N-2}$-symmetric structure; distance on a lattice plays no role.

Observe that $p$ and $q$ stay distinct, since $S_M$ does \emph{not} exchange sites
$0$ and $1$. Enlarging the symmetry to $S_2\times S_M$, which means $p=q$ and
$\epsilon_0=\epsilon_1$, equalises the two off-diagonal blocks of the reduced matrix
and kills the construction outright. The gap between $p$ and $q$ is doing real work.

\section{Symmetry reduction}

The $S_M$-invariant subspace of $\mathcal H_2$ is four-dimensional, spanned by
\begin{equation}
\begin{aligned}
\ket{0}&=\ket{01}, &
\ket{1}&=\tfrac1{\sqrt M}\textstyle\sum_{b\in B}\ket{0b},\\
\ket{2}&=\tfrac1{\sqrt M}\textstyle\sum_{b\in B}\ket{1b}, &
\ket{3}&=\binom M2^{-1/2}\!\!\textstyle\sum_{b<c\in B}\!\!\ket{bc}.
\end{aligned}
\label{eq:orbits}
\end{equation}
In this basis $\ket{\psi_0}=e_0=(1,0,0,0)^T$ and
\begin{equation}
v \;=\;\binom{M+2}{2}^{-1/2}
\Bigl(1,\ \sqrt M,\ \sqrt M,\ \sqrt{\tfrac{M(M-1)}2}\Bigr)^{T}.
\label{eq:v}
\end{equation}
Both $e_0$ and $v$ lie in this subspace, and it is invariant under $H$, so the entire
problem is four-dimensional for every $N$.

Writing the parameters~\eqref{eq:params} in the basis~\eqref{eq:orbits} gives
\begin{widetext}
\begin{equation}
\Heff=
\begin{pmatrix}
d_0 & \sqrt M\,q & \sqrt M\,p & 0\\
\sqrt M\,q & d_1 & a & \sqrt{2(M-1)}\,p\\
\sqrt M\,p & a & d_2 & \sqrt{2(M-1)}\,q\\
0 & \sqrt{2(M-1)}\,p & \sqrt{2(M-1)}\,q & d_3
\end{pmatrix},
\label{eq:Heff}
\end{equation}
\end{widetext}
with
\begin{equation}
\begin{aligned}
d_0&=\epsilon_0+\epsilon_1, &
d_1&=\epsilon_0+\epsilon_b+(M-1)w,\\
d_2&=\epsilon_1+\epsilon_b+(M-1)w, &
d_3&=2\epsilon_b+2(M-2)w .
\end{aligned}
\label{eq:diag}
\end{equation}
The structural constraints inherited from the spin-network form are
\begin{equation}
(\Heff)_{03}=0,\qquad
\frac{(\Heff)_{01}}{(\Heff)_{23}}
=\frac{(\Heff)_{02}}{(\Heff)_{13}}
=\sqrt{\frac{M}{2(M-1)}} .
\label{eq:constraints}
\end{equation}
The reduction alone would leave an ordinary $4\times4$ inverse eigenvalue problem;
it is these constraints that make it something else. The vanishing entry has a simple
reading: $\ket{01}$ and $\ket{bc}$ differ by two hops, and $H$ moves one excitation
at a time.

\begin{figure*}[t]
\centering
\includegraphics[width=0.72\textwidth]{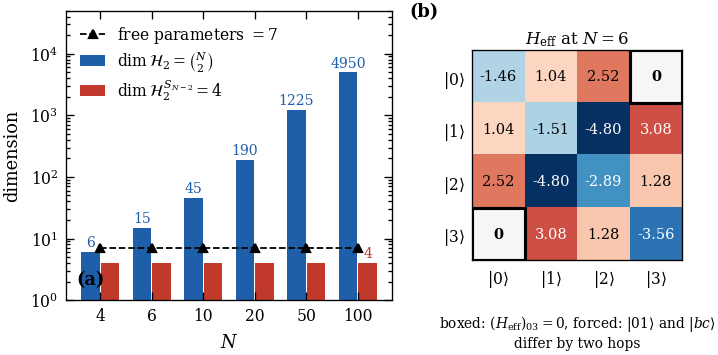}
\caption{\label{fig:reduction}%
\emph{(a)}~$\dim\mathcal H_2=\binom N2$ grows without bound, while the
$S_M$-invariant subspace holding $\ket{\psi_0}$ and $\Dket$ stays four-dimensional
and the parameter count stays at seven. An all-$N$ statement is tractable for that
reason. \emph{(b)}~The resulting $\Heff$ at $N=6$ in the orbit
basis~\eqref{eq:orbits}, for the solution of Sec.~\ref{sec:worked}. The boxed entry
is the structural zero of~\eqref{eq:constraints}, which holds for every member of the
family and not only for this solution.}
\end{figure*}

\begin{lemma}[Physical realisability]
\label{lem:realisable}
For every $M\ge2$ the map
$(\epsilon_0,\epsilon_1,\epsilon_b,w)\mapsto(d_0,d_1,d_2,d_3)$ defined
by~\eqref{eq:diag} is a linear bijection, with constant Jacobian determinant $4$ and
inverse
\begin{equation}
\begin{aligned}
\epsilon_0&=\tfrac12(d_0+d_1-d_2), &
\epsilon_1&=\tfrac12(d_0-d_1+d_2),\\
w&=\tfrac12(-d_0+d_1+d_2-d_3), &
\epsilon_b&=\tfrac12 d_3-(M-2)\,w .
\end{aligned}
\end{equation}
\end{lemma}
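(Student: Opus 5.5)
The proof is a direct linear-algebra check, and I would carry it out in three steps. First, I write \eqref{eq:diag} as $d=A\,(\epsilon_0,\epsilon_1,\epsilon_b,w)^T$, with the $4\times4$ coefficient matrix
\begin{equation}
A=\begin{pmatrix}
1&1&0&0\\
1&0&1&M-1\\
0&1&1&M-1\\
0&0&2&2(M-2)
\end{pmatrix}.
\end{equation}
Linearity is then immediate, and bijectivity is equivalent to $\det A\neq0$.

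Second, I compute $\det A$ by row reduction to upper-triangular form.
\begin{itemize}
\item Subtracting row~1 from row~2 gives $(0,-1,1,M-1)$.
\item Adding this new row~2 to row~3 gives $(0,0,2,2(M-1))$.
\item Subtracting the new row~3 from row~4 gives $(0,0,0,2(M-2)-2(M-1))=(0,0,0,-2)$.
\end{itemize}
The diagonal entries are then $1,-1,2,-2$, so $\det A=4$. The $M$-dependence cancels exactly in the last step, which is why the determinant is a constant.

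Third, rather than inverting $A$ abstractly, I verify the stated inverse by substituting \eqref{eq:diag} into each formula.
\begin{itemize}
\item $d_0+d_1-d_2=2\epsilon_0$, since the $\epsilon_b$ and $(M-1)w$ terms cancel between $d_1$ and $d_2$. Symmetrically, $d_0-d_1+d_2=2\epsilon_1$.
\item $-d_0+d_1+d_2=2\epsilon_b+2(M-1)w$. Subtracting $d_3$ leaves $2(M-1)w-2(M-2)w=2w$.
\item $\epsilon_b=\tfrac12 d_3-(M-2)w$ is just the definition of $d_3$ rearranged.
\end{itemize}
Since $A$ is square, a verified left inverse is automatically two-sided. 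This gives the bijection with the displayed inverse.

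There is no real obstacle here. The only point needing care is the role of the hypothesis $M\ge2$. The algebra above is valid for every $M$. The hypothesis is needed only so that the orbit basis \eqref{eq:orbits} is defined: for $M=1$ the state $\ket{3}$ and the coupling $w$ do not exist. I would remark on this explicitly so that the lemma is read as: any target diagonal $(d_0,\dots,d_3)$ of $\Heff$ is realised by a unique set of physical on-site energies and intra-$B$ coupling.
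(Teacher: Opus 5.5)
Your proposal is correct and matches the direct verification the paper relies on; the paper states this lemma without writing out a proof. Your row reduction gives $\det A=1\cdot(-1)\cdot 2\cdot(-2)=4$ independently of $M$, substituting \eqref{eq:diag} into the displayed formulas confirms that they invert the map, and your remark that $M\ge2$ is needed only for $\ket{3}$ and $w$ to exist is accurate.
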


Lemma~\ref{lem:realisable} is what keeps the construction physical. Whatever
diagonal $(d_0,d_1,d_2,d_3)$ the argument below selects, genuine on-site energies and
a genuine intra-$B$ coupling $w$ produce it. Generically $w\ne0$, so the spectator
sites cannot be left mutually uncoupled.

\section{Exchange symmetry}

Let $S$ be the orthogonal matrix exchanging orbit states $\ket1$ and $\ket2$. Direct
computation from~\eqref{eq:Heff}--\eqref{eq:diag} gives
\begin{equation}
S\,H(p,q,a)\,S^{T}=H(q,p,a),\qquad Se_0=e_0,\quad Sv=v .
\label{eq:exchange}
\end{equation}
Physically, $S$ swaps the two initially occupied sites. Neither the Dicke state nor
$e_0$ can tell them apart, so with
\begin{equation}
x_\pm=\tfrac12(e_0\pm v),\qquad \braket{x_+|x_-}=0,
\label{eq:xpm}
\end{equation}
one gets $Sx_\pm=x_\pm$, and the whole transfer geometry is $\Z_2$ invariant under
$p\leftrightarrow q$. This is where the reciprocal structure of Sec.~\ref{sec:cubic}
comes from. A physical symmetry forces it; the elimination merely reveals it.

The orthogonality asserted in~\eqref{eq:xpm} holds even though the source and target
are \emph{not} orthogonal: $\braket{e_0|v}=\binom N2^{-1/2}\ne0$. For real normalised
$e_0,v$ the overlap is real, so the cross terms cancel,
\begin{equation}
\braket{e_0+v|e_0-v}=\|e_0\|^2-\braket{e_0|v}+\braket{v|e_0}-\|v\|^2=0 .
\end{equation}
So~\eqref{eq:xpm} is available for any pair of real normalised states, and the
mechanism of Sec.~\ref{sec:stage1} never needs orthogonal endpoints. I spell this out
because the familiar site-to-site PST setting does have $\braket{e_0|v}=0$, which
makes it easy to read the general argument as depending on that.

\section{Spectral formulation and Stage I}
\label{sec:stage1}

From~\eqref{eq:xpm}, $e_0=x_++x_-$ and $v=x_+-x_-$. Suppose $H$ is chosen so that
\begin{equation}
Hx_+=0,
\label{eq:stage1}
\end{equation}
and so that the three eigenvalues of $\Heff|_{x_+^{\perp}}$ are
\begin{equation}
\{-n\pi,\,-\pi,\,\pi\},\qquad n\ \text{odd}.
\label{eq:spectrum}
\end{equation}
Since $0$ is then a simple eigenvalue, $x_-\in x_+^{\perp}$ lies entirely in the span
of the remaining eigenvectors. For odd $n$, $e^{in\pi}=(-1)^n=-1$ and
$e^{\mp i\pi}=-1$, so \emph{every} eigenvalue in~\eqref{eq:spectrum} contributes the
same phase $-1$. Hence $e^{-iH}$ acts as the identity on $\mathrm{span}\{x_+\}$ and as
$-\mathbb 1$ on the whole of $x_+^{\perp}$:
\begin{equation}
e^{-iH}x_+=x_+,\qquad e^{-iH}x_-=-x_- ,
\end{equation}
whence
\begin{equation}
e^{-iH}e_0=x_+-x_-=v .
\label{eq:transfer}
\end{equation}
Why do~\eqref{eq:stage1} and~\eqref{eq:spectrum} suffice by themselves? Because
$e^{-iH}=-\mathbb 1$ on the whole of $x_+^{\perp}$, the second relation needs nothing
more than $x_-\in x_+^{\perp}$. In particular $x_-$ need \emph{not} be an eigenvector
of $H$, and no condition is placed on the eigenvectors at all beyond the spectrum on
$x_+^\perp$. Everything now rests on realising~\eqref{eq:stage1}
and~\eqref{eq:spectrum} inside the family~\eqref{eq:Heff}.

\begin{proposition}[Stage-I freedom]
\label{prop:stage1}
Set $c=\binom N2^{-1/2}$ and $\gamma=(1+c)/c$. For \emph{any} $p,q,a$, the choice
\begin{equation}
\begin{aligned}
d_0&=-\frac{M(p+q)}{\gamma}, &
d_1&=-a-\gamma q-(M-1)p,\\
d_3&=-2(p+q), &
d_2&=-a-\gamma p-(M-1)q,
\end{aligned}
\label{eq:stage1diag}
\end{equation}
makes $x_+$ a zero eigenvector of $\Heff$. In particular $\det\Heff=0$.
\end{proposition}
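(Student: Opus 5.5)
The plan is a direct verification. I will write $x_+$ explicitly in the orbit basis~\eqref{eq:orbits}, apply $\Heff$ from~\eqref{eq:Heff} row by row, and check that each of the four components vanishes under the choice~\eqref{eq:stage1diag}. The diagonal entries $d_i$ enter linearly, one per row, so each row condition can be solved for its own $d_i$. This also shows the choice is forced once $p,q,a$ are fixed.

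First I simplify $x_+$. Since $\binom{M+2}{2}=\binom N2$, the prefactor in~\eqref{eq:v} is exactly $c$, so $v=c\,(1,\sqrt M,\sqrt M,\sqrt{M(M-1)/2})^T$. Then
\begin{equation}
x_+=\tfrac12(e_0+v)=\tfrac{c}{2}\,u,\qquad
u=\Bigl(\gamma,\ \sqrt M,\ \sqrt M,\ \sqrt{\tfrac{M(M-1)}{2}}\Bigr)^{T},
\end{equation}
with $\gamma=(1+c)/c$. It therefore suffices to show $\Heff u=0$.

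Next I compute the rows of $\Heff u$:
\begin{itemize}
\item Row $0$ gives $\gamma d_0+M(q+p)$, which vanishes exactly when $d_0=-M(p+q)/\gamma$.
\item Row $1$ gives $\sqrt M\bigl[\gamma q+d_1+a+(M-1)p\bigr]$. Here I use $\sqrt{2(M-1)}\,\sqrt{M(M-1)/2}=(M-1)\sqrt M$. This fixes $d_1$ as stated.
\item Row $2$ is the image of row $1$ under $p\leftrightarrow q$, by the exchange symmetry~\eqref{eq:exchange}. It gives $d_2$.
\item Row $3$ gives $\sqrt{2M(M-1)}\,(p+q)+d_3\sqrt{M(M-1)/2}=\sqrt{M(M-1)/2}\,\bigl[2(p+q)+d_3\bigr]$, hence $d_3=-2(p+q)$.
\end{itemize}

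The only point needing care is the row-$3$ division by $\sqrt{M(M-1)/2}$, which requires $M\ge2$. That holds since $N\ge4$. For $M=1$ the fourth orbit state does not exist, so the case is moot.

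Finally, $x_+\neq0$ because its first component $(1+c)/2$ is positive. So $\Heff$ has a nontrivial kernel and $\det\Heff=0$.

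By Lemma~\ref{lem:realisable}, the resulting diagonal comes from genuine on-site energies and a genuine coupling $w$, for arbitrary $p,q,a$. No step is a real obstacle; the main risk is only arithmetic slips in the row-$1$/row-$3$ square-root simplifications.
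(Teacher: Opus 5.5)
Your proposal is correct and takes essentially the same route as the paper: you solve the four linear equations $\Heff x_+=0$ row by row for $d_0,\dots,d_3$, using the same two square-root identities the paper cites, namely $\sqrt{2(M-1)}\sqrt{M(M-1)/2}=(M-1)\sqrt M$ and $\sqrt{2M(M-1)}=2\sqrt{M(M-1)/2}$. Rescaling to $x_+=\tfrac{c}{2}u$ with $u_0=\gamma$, and handling row~$2$ by the $p\leftrightarrow q$ exchange (valid because $u_1=u_2$), are only tidy shortcuts within that same computation.
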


\begin{proof}
$x_+\propto\bigl(1+c,\ c\sqrt M,\ c\sqrt M,\ c\sqrt{M(M-1)/2}\bigr)$.
The four components of $\Heff x_+=0$ are linear in $d_0,\dots,d_3$ with nonvanishing
coefficients, and~\eqref{eq:stage1diag} is their unique solution. Two identities
simplify the result: $\sqrt{2(M-1)}\,X_3=(M-1)X_1$ and
$\sqrt{2(M-1)}\,X_1/X_3=2$, where $X_k$ are the components of $x_+$.
\end{proof}

Proposition~\ref{prop:stage1} is the structural fact everything downstream leans on.
Demanding that $x_+$ be a zero eigenvector turns out to cost nothing, since it places
\emph{no} condition on $p,q,a$, and Lemma~\ref{lem:realisable} guarantees the
diagonal it produces is physically realisable. What is left to arrange is the
spectrum on $x_+^\perp$, and that is where the difficulty now lives.

\section{The inverse spectral problem}

The matrix defined by~\eqref{eq:Heff} and~\eqref{eq:stage1diag} is homogeneous,
$H(cp,cq,ca)=cH(p,q,a)$, so the eigenvalue \emph{ratios} depend only on
\begin{equation}
x=q/p,\qquad y=a/p .
\end{equation}

\begin{lemma}[The chart $p\ne0$ suffices]
\label{lem:chart}
Setting $p=0$ in the spectral equations~\eqref{eq:F},~\eqref{eq:G} below yields only
$a=q=0$, hence $H=0$, which effects no transfer. By~\eqref{eq:exchange} the same
holds for $q=0$. Every nontrivial solution therefore has $p\ne0$, and one may set
$p=1$.
\end{lemma}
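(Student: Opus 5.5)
The plan is to substitute $p=0$ directly into the Stage-I matrix of Proposition~\ref{prop:stage1} and use the structure that results. With $p=0$ the diagonal~\eqref{eq:stage1diag} becomes
\begin{equation}
d_0=-Mq/\gamma,\qquad d_1=-a-\gamma q,\qquad d_2=-a-(M-1)q,\qquad d_3=-2q,
\end{equation}
and the only surviving off-diagonal entries of~\eqref{eq:Heff} are $(\Heff)_{01}=\sqrt M\,q$, $(\Heff)_{12}=a$ and $(\Heff)_{23}=\sqrt{2(M-1)}\,q$. So $\Heff$ is a tridiagonal (path) matrix $0\!-\!1\!-\!2\!-\!3$, and its characteristic polynomial can be written down by the usual three-term recursion.

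The spectral conditions (which I take to be the content of~\eqref{eq:F},~\eqref{eq:G}) say that the spectrum equals $\{0,-n\mu,-\mu,\mu\}$ for some scale $\mu$. Writing $\det(\lambda-\Heff)=\lambda(\lambda^3-e_1\lambda^2+e_2\lambda-e_3)$, this gives
\begin{equation}
e_1=-n\mu,\qquad e_2=-\mu^2,\qquad e_3=n\mu^3 .
\end{equation}
Equivalently, $e_3=e_1e_2$ and $e_1^2=-n^2e_2$, and these are the two homogeneous polynomial equations in $(q,a)$.

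I would split into two cases.

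\emph{Case $q=0$.} Here $\Heff$ reduces to the block $\begin{pmatrix}-a&a\\a&-a\end{pmatrix}$ on $\{\ket1,\ket2\}$ and zeros elsewhere. Its spectrum is $\{0,0,0,-2a\}$, which has at most one nonzero eigenvalue. The target spectrum has two nonzero eigenvalues $\pm\mu$ whenever $\mu\ne0$, so matching forces $\mu=0$ and hence $a=0$.

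\emph{Case $q\ne0$.} By homogeneity set $q=1$ and $y=a$. Then $e_1,e_2,e_3$ become explicit polynomials in $y$ with coefficients in $M$ and $\gamma$. I would eliminate $n$ by combining the two conditions into the single $n$-free relation $e_3^2=-e_2^3$ (this follows from $e_3=n\mu^3$, $e_2=-\mu^2$ and $n^2=e_1^2/\mu^2$). I would then show that $e_3=e_1e_2$ together with $e_2<0$ has no real root $y$. The route is to compute the resultant in $y$ and check that it has constant sign for $M\ge2$, using $\gamma=1+\binom{M+2}{2}^{1/2}>1$.

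This case is the main obstacle. A cleaner alternative I would try first is a sign argument. For a path matrix with all weights nonzero the eigenvalues are simple, so the multiplicity issue disappears. Since $\mu\ne0$, the real constraint is $e_2<0$, and the trace is explicitly $e_1=-2a-q(M/\gamma+\gamma+M+1)$. I would look for a contradiction between the sign of $e_3$ and that of $e_1e_2$.

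Once these cases are settled, the conclusion $a=q=0$ gives $\Heff=0$: the Stage-I diagonal is linear in $(p,q,a)$, so it vanishes too, and $H=0$ effects no transfer. The $q=0$ statement then follows from~\eqref{eq:exchange}. Conjugation by $S$ maps $H(p,q,a)$ to $H(q,p,a)$ and fixes $e_0$ and $v$, so it preserves the spectral conditions. Hence every nontrivial solution has $p\ne0$, and by homogeneity one may rescale to $p=1$.
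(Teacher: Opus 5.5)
\textbf{Verdict: genuine gap in the case $q\ne0$.} The following parts of your proposal are correct: reducing the $p=0$ Stage-I matrix to the path $0$--$1$--$2$--$3$, rewriting \eqref{eq:F}--\eqref{eq:G} as $e_1^2=-n^2e_2$ and $e_3=e_1e_2$, the case $q=0$, and the closing appeal to \eqref{eq:exchange}.

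In the case $q\ne0$ you try to prove a statement that does not involve $n$, and that statement is false.

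\textbf{First problem.} The relation $e_3^2=-e_2^3$ does not follow from your parametrisation. From $e_3=n\mu^3$ and $e_2=-\mu^2$ one gets $e_3^2=-n^2e_2^3=e_1^2e_2^2$. Your relation therefore holds only for $n=1$.

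\textbf{Second problem.} The corrected $n$-free target, that $e_3=e_1e_2$ has no real root with $e_2<0$, fails. Take $N=9$ (so $M=7$, $\gamma=7$) and $(p,q,a)=(0,1,-4)$. The Stage-I matrix is then the path with diagonal $(-1,-3,-2,-2)$ and bonds $\sqrt7,\,-4,\,2\sqrt3$. Its characteristic polynomial is $\lambda(\lambda+8)(\lambda^2-12)$. Hence $e_1=-8$, $e_2=-12$ and $e_3=96=e_1e_2$, and \eqref{eq:F}--\eqref{eq:G} hold with $n=4/\sqrt3$. The other root, $a=-128/19$, gives $n=6/19$. At $N=4$ one finds in the same way $n\approx1.542$ and $n\approx0.604$.

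So nontrivial $p=0$ solutions do exist for non-integer $n$. Neither a constant-sign resultant nor a sign clash between $e_3$ and $e_1e_2$ can exclude them; in the example both equal $96$.

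\textbf{What the proof actually needs.} These solutions are excluded only by arithmetic: the exceptional values of $n$ are not odd integers. Any proof therefore has to keep $n$ in play. The paper does this in two steps.
\begin{enumerate}
\item A lexicographic Gr\"obner basis over $\mathbb{Q}(M,\gamma,n)$ settles generic $n$.
\item The exceptional values are the roots of a resultant $\mathcal R(M,\gamma,n)$ of degree $4$ in $n$. Your two roots in $a/q$ give its two values of $n^2$. This resultant is checked to have no positive odd integer root for $2\le M\le100$.
\end{enumerate}

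\textbf{How your computation can be repaired.} Your path computation makes this step explicit, because the $a^2$ terms cancel. One gets
$e_2=(2M/\gamma+\gamma+M+3)\,qa+(M+1)(M/\gamma+\gamma)\,q^2$ and $e_3=-4\sqrt{\binom N2}\,q^2a$.
So $e_3=e_1e_2$ is an $n$-free quadratic in $a/q$. The lemma then reduces to showing that $-e_1^2/e_2$, evaluated at the two roots of that quadratic, is never the square of an odd integer. At $N=9$ these values are $16/3$ and $36/361$. This is a number-theoretic condition, not a sign condition.

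\textbf{A further observation.} The leading coefficients of $F$ and $G$ in $a$ are $-4$ and $-8n$, independent of $p$ and $q$. Hence every nontrivial $p=0$ solution occurs at a zero of the leading coefficient $A(n)$ of $P$. At any $n$ with $A<0$, which is how $n$ is chosen in Theorem~\ref{thm:main}, the chart $p=0$ is therefore excluded automatically.
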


\begin{proof}
With $p=0$, Eqs.~\eqref{eq:F}--\eqref{eq:G} reduce to two forms in $(a,q)$,
homogeneous of degrees $2$ and $3$ respectively. Over the field
$\mathbb Q(M,\gamma,n)$ a lexicographic Gr\"obner basis of the ideal they generate,
with $a>q$, contains the monomial $q^{4}$; hence $q=0$ on the common zero set, and
the degree-two generator then forces $a^{2}=0$. Equivalently, putting $u=a/q$ and
eliminating $u$, a solution with $q\ne0$ would require the vanishing of a resultant
$\mathcal R(M,\gamma,n)$, a polynomial of degree $4$ in $n$; $\mathcal R$ has no root
at any positive odd integer $n$ for $2\le M\le100$ with the physical value
$\gamma=1+\sqrt{(M+1)(M+2)/2}$, so the generic conclusion holds at every parameter
value used here. Both computations appear in the companion notebook.
\end{proof}

Let $e_1,e_2,e_3$ be the elementary symmetric functions of the three nonzero
eigenvalues (equivalently, of the four eigenvalues of $\Heff$, one of which is $0$).
For a spectral triple $(m_1,m_2,m_3)$ put $s_1=\sum m_i$,
$\sigma_2=\sum_{i<j}m_im_j$, $\sigma_3=m_1m_2m_3$. The eigenvalues are proportional to
$(m_1,m_2,m_3)$ if and only if
\begin{align}
F&\equiv\sigma_2 e_1^2-s_1^2e_2=0,\label{eq:F}\\
G&\equiv\sigma_3 e_1^3-s_1^3e_3=0,\label{eq:G}
\end{align}
the overall scale being fixed afterwards by $e_1=s_1\pi$. That this rescaling is
legitimate requires $e_1\ne0$, which the following lemma establishes.

\begin{lemma}[Nondegeneracy of the scale]
\label{lem:e1}
For the spectral triple $(m_1,m_2,m_3)=(-n,-1,1)$ with $n$ a positive odd integer,
any solution of~\eqref{eq:F}--\eqref{eq:G} with $p\ne0$ has $e_1\ne0$.
\end{lemma}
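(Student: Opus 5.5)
The plan is to show that $e_1=0$ forces all the spectral data to vanish, and then to use that $\Heff$ is real symmetric to conclude $\Heff=0$. This contradicts $p\ne0$.

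First I evaluate the symmetric functions of the triple $(-n,-1,1)$:
\begin{equation}
s_1=-n,\qquad \sigma_2=n-n-1=-1,\qquad \sigma_3=n .
\end{equation}
Substituting these into~\eqref{eq:F}--\eqref{eq:G} gives
\begin{equation}
F=-e_1^2-n^2e_2=0,\qquad G=n\,e_1^3+n^3e_3=0 .
\end{equation}
Now suppose $e_1=0$. Since $n\ge1$, the first equation gives $n^2e_2=0$, hence $e_2=0$. The second equation gives $n^3e_3=0$, hence $e_3=0$.

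Next I translate this into a statement about $\Heff$. By Proposition~\ref{prop:stage1} we have $\det\Heff=0$. The characteristic polynomial of the $4\times4$ matrix $\Heff$ is therefore $\lambda^4-e_1\lambda^3+e_2\lambda^2-e_3\lambda$, where the $e_k$ are exactly the elementary symmetric functions of the three nonzero-slot eigenvalues, as noted before~\eqref{eq:F}. With $e_1=e_2=e_3=0$ this polynomial reduces to $\lambda^4$, so every eigenvalue of $\Heff$ is zero.

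Finally, $\Heff$ in~\eqref{eq:Heff} is real symmetric, because $p,q,a$ are real and the diagonal~\eqref{eq:stage1diag} is real. A real symmetric matrix is orthogonally diagonalisable, so one whose spectrum is $\{0\}$ must be the zero matrix. In particular its $(0,2)$ entry $\sqrt M\,p$ vanishes, and since $M\ge2$ this gives $p=0$. This contradicts the hypothesis $p\ne0$, which is the chart justified by Lemma~\ref{lem:chart}. Hence $e_1\ne0$.

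I expect no real obstacle. The only point needing care is the identification of $e_1,e_2,e_3$ with the coefficients of the characteristic polynomial of the full $4\times4$ matrix. This relies on the zero eigenvalue from Proposition~\ref{prop:stage1}, which makes $e_4=\det\Heff=0$ and removes the constant term. The step from nilpotent to zero must use symmetry, since a general nilpotent matrix need not vanish.
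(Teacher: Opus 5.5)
Your proof is correct and follows the paper's argument. Assuming $e_1=0$, the nonvanishing of $s_1=-n$ forces $e_2=e_3=0$; all eigenvalues of the real symmetric $\Heff$ then vanish, so $\Heff=0$, contradicting $(\Heff)_{02}=\sqrt M\,p\neq0$. The only cosmetic difference is how the eigenvalues are shown to vanish: the paper uses $e_2=-\tfrac12\sum_i\lambda_i^2$ and the reality of the spectrum, which does not need $e_3$, whereas you read it off the characteristic polynomial $\lambda^4$ using $e_3=0$ and $\det\Heff=0$.
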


\begin{proof}
Suppose $e_1=0$. Since $s_1=-n\ne0$, Eq.~\eqref{eq:F} reduces to $-s_1^2e_2=0$,
so $e_2=0$; Eq.~\eqref{eq:G} likewise gives $e_3=0$. Let
$\lambda_1,\lambda_2,\lambda_3$ be the eigenvalues of $\Heff$ other than the zero
eigenvalue supplied by Proposition~\ref{prop:stage1}. Then
\begin{equation}
e_2=\tfrac12\Bigl[\bigl(\textstyle\sum_i\lambda_i\bigr)^2-\sum_i\lambda_i^2\Bigr]
   =-\tfrac12\sum_i\lambda_i^2 ,
\end{equation}
using $e_1=\sum_i\lambda_i=0$. Hence $e_2=0$ forces
$\lambda_1=\lambda_2=\lambda_3=0$, so all four eigenvalues of $\Heff$ vanish and,
$\Heff$ being real symmetric, $\Heff=0$. But $(\Heff)_{02}=\sqrt M\,p\ne0$ for
$p\ne0$, a contradiction.
\end{proof}

\section{Reciprocal sextic and cubic reduction}
\label{sec:cubic}

Eliminating $y$ gives
\begin{equation}
\mathrm{Res}_y(F,G)=\gamma^{6}\,P(x),\qquad \deg_x P=6,
\label{eq:res}
\end{equation}
\emph{Normalisation.} A resultant is defined only up to a nonzero factor, so
Eq.~\eqref{eq:res} does not by itself determine $P$: any $P\mapsto cP$ with
$c=c(M,\gamma)\in\R\setminus\{0\}$ is equally valid, and the explicit values quoted
below refer to one fixed choice, namely the primitive degree-six factor returned by
the elimination. What the argument actually uses is invariant under such a
rescaling: $\deg_xP$, the palindromic property, the location of the roots, and the
sign of the product $A\cdot P(-1)$, where $A$ is the leading coefficient (both $A$ and
$P(-1)$ scale by $c$, so their product scales by $c^{2}>0$). Individual values of $A$
and of $P(-1)$ are not invariant, and are quoted only in the fixed normalisation.
The polynomial $P$ is palindromic for arbitrary $s_1,\sigma_2,\sigma_3$,
\begin{equation}
P(x)=x^{6}P(1/x),
\end{equation}
as required by the exchange symmetry~\eqref{eq:exchange}. Writing
$P(x)=\sum_{k=0}^{6}A_k^{P}x^{k}$ with $A^P_k=A^P_{6-k}$, the substitution
$z=x+x^{-1}$ gives $P(x)=x^{3}Q(z)$ with
\begin{equation}
Q(z)=A^P_6z^{3}+A^P_5z^{2}+(A^P_4-3A^P_6)z+(A^P_3-2A^P_5).
\end{equation}
Real $x\ne0$ corresponds precisely to $|z|\ge2$. We use the branch $z<-2$, whose
boundary $z=-2$ is $x=-1$, i.e.\ $q=-p$.

\begin{remark}[Why this branch, and what else is there]
\label{rem:branch}
The cubic $Q$ may also have a real root with $z>+2$, which is equally admissible:
the requirement is $|z|\ge2$, not a sign. The branch $z<-2$ is singled out because it
is the one whose existence can be established uniformly in $N$ by the sign argument of
Sec.~\ref{sec:exist}, and because it is the only admissible branch at the smallest
system size: for $N=4$ the three roots of $Q$ are $z\simeq-2.422,\,-1.757,\,+1.077$,
of which only the first satisfies $|z|\ge2$. For $N\ge6$ a root with $z>+2$ is
present as well and yields a second, inequivalent family of Hamiltonians realising the
same transfer. Within each branch the two representatives
$x_*$ and $1/x_*$ of~\eqref{eq:xstar} are exchanged by $p\leftrightarrow q$ and give
the same physics with $\epsilon_0\leftrightarrow\epsilon_1$. The worked example of
Sec.~\ref{sec:worked} is taken from the $z>+2$ family.
\end{remark}

\begin{lemma}[Boundary value]
\label{lem:boundary}
In the normalisation fixed above, and for all $M\ge2$, $\gamma>0$, $\sigma_3\ne0$,
\begin{equation}
P(-1)=8\gamma^{6}\sigma_3^{2}
\bigl[(M-\gamma)^{2}+4M+2\gamma-3\bigr]^{3}>0 ,
\label{eq:Pm1}
\end{equation}
so that $Q(-2)=-P(-1)<0$.
\end{lemma}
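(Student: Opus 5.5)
The plan is to evaluate $P(-1)$ by working with $x=-1$, that is $q=-p$, directly in $F$ and $G$, rather than expanding the full sextic. By the definition of the resultant, $\mathrm{Res}_y(F,G)$ restricted to $x=-1$ equals $\mathrm{Res}_y\bigl(F|_{x=-1},G|_{x=-1}\bigr)$, provided the leading coefficients in $y$ of $F$ and $G$ do not vanish there. I would check this first: $F$ has degree 2 in $y=a/p$ and $G$ degree 3, since $e_1$ is linear in $a$ and $e_k$ has degree at most $k$ in $a$.

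I would set $p=1$ and $q=-1$ in Eq.~\eqref{eq:stage1diag}. This gives $d_0=0$, $d_3=0$, $d_1=-a+\gamma+(M-1)$ and $d_2=-a-\gamma-(M-1)$. The matrix $\Heff$ then has a visible structure: $S$ combined with a sign flip maps it to itself with $p\leftrightarrow q$. This suggests changing to the basis $\ket1\pm\ket2$, and I expect the characteristic polynomial to factor there.

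Next I would compute $e_1,e_2,e_3$ as explicit polynomials in $a$, remembering that one eigenvalue is zero by Proposition~\ref{prop:stage1}.
\begin{itemize}
\item $e_1=\mathrm{tr}\,\Heff=-2a$.
\item $e_2$ is the sum of principal $2\times2$ minors.
\item $e_3$ is the sum of principal $3\times3$ minors, since $e_4=\det\Heff=0$.
\end{itemize}
Substituting these into $F=\sigma_2e_1^2-s_1^2e_2$ and $G=\sigma_3e_1^3-s_1^3e_3$ gives a quadratic and a cubic in $a$. I would then take their resultant symbolically, as a Sylvester $5\times5$ determinant, or more cleverly as the product of $G$ over the two roots of $F$.

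The expected outcome is that the $s_1,\sigma_2$ dependence cancels and only $\sigma_3^2$ survives, together with a cube of the quadratic $K=(M-\gamma)^2+4M+2\gamma-3$ in $M,\gamma$. The $\gamma^6$ factor should be stripped as in Eq.~\eqref{eq:res}.

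The main obstacle is this symbolic elimination and the identification of the precise factor and constant $8$ in the fixed normalisation. I would delegate it to a computer algebra check (matching the companion notebook), but structure it so the factorisation is visible. A good check is that $e_1=-2a$, so $F$ should read $4\sigma_2a^2-s_1^2e_2(a)$. If $e_3$ turns out to be proportional to a power of $K$ independent of $a$ up to a shift, the cube $K^3$ will appear naturally.

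Positivity is then immediate. $\gamma>0$ and $\sigma_3\ne0$ give $8\gamma^6\sigma_3^2>0$. Writing $K=(M-\gamma)^2+4M+2\gamma-3$, we have $4M-3\ge5$ for $M\ge2$ and $2\gamma>0$, so $K>0$.

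Finally, $P(x)=x^3Q(x+x^{-1})$ at $x=-1$ gives $P(-1)=(-1)^3Q(-2)=-Q(-2)$. Hence $Q(-2)=-P(-1)<0$.
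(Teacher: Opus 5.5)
Specialising to $x=-1$ before eliminating is a sound route, and it differs from the paper's. The paper simply quotes a computer-algebra factorisation of $P(-1)$, derived twice independently, and then argues positivity and $Q(-2)=-P(-1)$ exactly as you do.

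However, the one explicit computation in your plan is wrong. Eq.~\eqref{eq:stage1diag} with $p=1$, $q=-1$ gives $d_1=-a+g$ and $d_2=-a-g$ with $g=\gamma-M+1$, not $g=\gamma+M-1$. Carried through, your signs would yield the bracket $(M+\gamma)^2+4M-2\gamma-3$ instead of $(M-\gamma)^2+4M+2\gamma-3$. That is still positive, but it is not the identity to be proved.

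Your symmetry heuristic is also off: no signed version of $S$ fixes the specialised matrix, because any such conjugation swaps $d_1$ and $d_2$, which differ by $2g$. What actually makes the basis $\ket{1}\pm\ket{2}$ work is the following:
\begin{enumerate}
\item $p+q=0$ kills $d_0$, $d_3$, and the couplings of $\ket{0},\ket{3}$ to $\ket{1}+\ket{2}$;
\item $d_1+d_2+2a=0$ kills the diagonal entry of $\ket{1}+\ket{2}$.
\end{enumerate}
So in the basis $(\ket{1}+\ket{2})/\sqrt2$, $(\ket{1}-\ket{2})/\sqrt2$, $\ket{0}$, $\ket{3}$ the matrix is a star centred on $(\ket{1}-\ket{2})/\sqrt2$, which carries diagonal entry $-2a$. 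The three leaves have zero diagonal and are attached by couplings $g$, $-\sqrt{2M}$, $2\sqrt{M-1}$. The characteristic polynomial is therefore $\lambda^2(\lambda^2+2a\lambda-K)$ with $K=g^2+2M+4(M-1)$, which is exactly the paper's bracket.

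Hence $e_1=-2a$, $e_2=-K$ and $e_3=0$, so the cube comes from $e_2$, not from $e_3$. Then $F|_{x=-1}=4\sigma_2a^2+s_1^2K$ and $G|_{x=-1}=-8\sigma_3a^3$. Evaluating $F$ at the triple root $a=0$ of $G$ gives, by hand, $\mathrm{Res}_y(F,G)|_{x=-1}=(-8\sigma_3)^2F(0)^3=64\sigma_3^2s_1^6K^3$.

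For the specialisation step itself, ``degree at most $k$'' is not enough. You need $e_2$ and $e_3$ to be linear in $a$; this holds because the $a^2$ cancels in $d_1d_2-a^2$. Then the $a$-leading coefficients of $F$ and $G$ are $4\sigma_2$ and $-8\sigma_3$, independent of $x$.

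Contrary to your expectation, $s_1$ does not drop out. Since $e_2,e_3$ are affine in $e_1$, substituting $e_1=s_1\rho$ turns $F$ and $G$ into $s_1^2$ and $s_1^3$ times polynomials in $\rho$. So $s_1^6$ divides $\mathrm{Res}_y(F,G)$ identically and is $x$-independent content. The quoted $P(-1)$ equals the value above times $\gamma^6/(8s_1^6)>0$. That prefactor is a normalisation convention your specialisation cannot produce, but the sign, which is all Sec.~\ref{sec:exist} uses, is unaffected.

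With these repairs your route is a complete hand proof, more transparent than the paper's. It also exhibits $K>0$ as a sum of squares without using $\gamma>0$.
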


The bracket is positive because $(M-\gamma)^2\ge0$ and $4M+2\gamma-3>0$ for
$M\ge2$, $\gamma>0$; the prefactor $8\gamma^6\sigma_3^2$ is positive because
$\gamma>0$ and $\sigma_3\ne0$. (For the family~\eqref{eq:family} used below,
$\sigma_3=n\ne0$ since $n$ is odd.) Under $P\mapsto cP$ both sides of~\eqref{eq:Pm1}
scale by $c$, so the strict inequality is stated in the fixed normalisation; the
combination used in Sec.~\ref{sec:exist} is invariant.

\section{Existence of a real root}
\label{sec:exist}

We now specialise to the family
\begin{equation}
(m_1,m_2,m_3)=(-n,-1,1),\qquad
s_1=-n,\ \ \sigma_2=-1,\ \ \sigma_3=n,
\label{eq:family}
\end{equation}
with $n$ a positive odd integer. Write $A\equiv A^P_6$ for the leading coefficient,
which is also the leading coefficient of $Q$. For the family~\eqref{eq:family}, $A$ is
a quartic in $n$, $A=\sum_{k=0}^{4}A_kn^{k}$.

\begin{lemma}[Sign of the leading coefficient]
\label{lem:a4}
\begin{equation}
A_4=-\gamma^{2}(M+1)(M+\gamma^{2})(M^{2}+3M+2\gamma^{2})
\bigl(M\gamma+2M+\gamma^{2}+3\gamma\bigr)^{2}<0
\end{equation}
for all $M\ge2$ and $\gamma>0$, in the same fixed normalisation.
\end{lemma}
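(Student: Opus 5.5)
The plan is to compute $A=A^P_6$ explicitly as a polynomial in $(M,\gamma,n)$ and isolate its $n^4$ part. Three ingredients are needed: the characteristic-polynomial coefficients of $\Heff$, the resultant in $y$, and the factorisation of the top coefficient.

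\emph{Step 1: the invariants.} With $p=1$, $q=x$, $a=y$ and the diagonal~\eqref{eq:stage1diag} from Proposition~\ref{prop:stage1}, I would write down $e_1,e_2,e_3$.
\begin{itemize}
\item $e_1=\operatorname{tr}\Heff$ is linear in $x$ and $y$: it equals $-2y-(p+q)\bigl[M/\gamma+\gamma+M-1+2\bigr]$.
\item $e_2$ is the sum of the principal $2\times2$ minors, so it is quadratic in $(x,y)$.
\item $e_3$ is the sum of the principal $3\times3$ minors, so it is cubic in $(x,y)$.
\end{itemize}
Then $F$ is a polynomial of degree $2$ in $(x,y)$ and $G$ one of degree $3$. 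For the family~\eqref{eq:family} we have $s_1^2=n^2$, $s_1^3=-n^3$, $\sigma_2=-1$ and $\sigma_3=n$. Hence $F=-e_1^2-n^2e_2$ and $G=ne_1^3+n^3e_3$. After dividing $G$ by $n$ (which changes the resultant only by a power of $n$), $G=e_1^3+n^2e_3$.

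\emph{Step 2: reduce $A$ to top-degree forms.} The coefficient of $x^6$ in $\mathrm{Res}_y(F,G)$ is the resultant of the top-degree homogeneous parts of $F$ and $G$ in $(x,y)$, with $x$ set to $1$. This holds provided both degrees in $y$ are attained; this is the case because $y^2$ appears in $F$ through $e_1^2$ and $e_2$ (the $a^2$ term comes from the $(1,2)$ minor). So I only need the quadratic form $F_2(1,y)$ and the cubic form $G_3(1,y)$, both polynomial in $n^2$. This explains why $A$ contains only even powers of $n$.

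\emph{Step 3: extract $A_4$ and factor it.} The $n^4$ coefficient comes from balancing the $n^2e_2$ and $n^2e_3$ parts against the $e_1$ parts in the Sylvester determinant. The pure $n^6$ contribution $\mathrm{Res}(e_2,e_3)$ should cancel, or be absorbed by the primitive-part normalisation that the problem imposes. Concretely, I would expand the $5\times5$ Sylvester matrix in $n^2$ and collect the appropriate term. I would then factor that term over $\mathbb Q[M,\gamma]$, using the relation $\gamma=(1+c)/c$ only if it is needed. The expected factorisation is exactly the product stated in the lemma.

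Each factor then has a clear sign for $M\ge2$ and $\gamma>0$:
\begin{itemize}
\item $\gamma^2>0$, $M+1>0$, $M+\gamma^2>0$ and $M^2+3M+2\gamma^2>0$;
\item the squared factor is positive, since $M\gamma+2M+\gamma^2+3\gamma>0$.
\end{itemize}
With the overall minus sign, this gives $A_4<0$.

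The main obstacle is Step 3: getting the explicit symbolic factorisation, and in particular keeping track of the normalisation so that the sign is the one quoted for the fixed primitive factor. I would first verify the factorisation numerically at a few $(M,\gamma)$ pairs, then confirm it symbolically with a computer-algebra Sylvester expansion, as in the companion notebook.
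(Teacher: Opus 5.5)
Your strategy matches the paper's: the lemma is a symbolic identity for the top coefficient followed by a check that every factor is manifestly positive, and the paper also gets the factorisation by machine. Your Step~2 is correct and a real simplification: the $x^6$ coefficient of $\mathrm{Res}_y(F,G)$ is the resultant of $F,G$ at $p=0$, $q=1$.

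Two things need fixing, and the second is a genuine gap.

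\textbf{The Step~2 hypothesis is justified wrongly.} In the minor $d_1d_2-a^2$ the $a^2$ terms cancel, because $d_1$ and $d_2$ each contain $-a$. So $e_2$ is only linear in $y$, and so is $e_3$. The $y$-leading coefficients of $F$ and $G$ are therefore $4\sigma_2=-4$ and $-8\sigma_3=-8n$, both coming from $e_1$ alone. You also need this for $G$, not just $F$.

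\textbf{Step~3, as written, extracts the wrong coefficient.} After your division by $n$, $\mathrm{Res}_y(F_2,G_3/n)$ is $n^4$ times a quadratic in $n^2$. The formal top term $n^{10}$ vanishes identically, so nothing needs to cancel or be ``absorbed'': the $n^2$-parts $-e_2$ and $e_3$ have $y$-degree $1$, so their Sylvester block has a zero first column. The factor $n^4$ is content that the paper's normalisation removes. Indeed, $\mathrm{Res}_y(F,G)$ vanishes to order $s_1^6$ at $s_1=0$, because $F$ and $G$ then share the root $e_1=0$. So the $n^4$ coefficient you plan to collect is a positive multiple of the paper's $A_0$, not $A_4$; without the division by $n$ it is $0$. $A_4$ sits at $n^8$.

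\textbf{The fix is short.} At $p=0$, $q=1$, $a=y$ the Stage-I matrix is tridiagonal, with diagonal $(-M/\gamma,\,-y-\gamma,\,-y-M+1,\,-2)$ and off-diagonal $(\sqrt M,\,y,\,\sqrt{2(M-1)})$. It gives $e_1=-2y-K$, $\gamma e_2=By+C$ and $\gamma e_3=-Dy$, where
\[
K=\tfrac{(M+\gamma)(\gamma+1)}{\gamma},\quad B=M\gamma+2M+\gamma^2+3\gamma,\quad C=(M+1)(M+\gamma^2),\quad D=M^2+3M+2\gamma^2 .
\]
Change variable to $L=e_1$ and put $\nu=n^2/(2\gamma)$, $E=BK-2C$. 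Then $F_2=-(L^2-\nu BL-\nu E)$ and $G_3/n=L^3+\nu D(L+K)$. Reduce $G_3/n$ modulo the quadratic and multiply over its two roots; the $\nu^5$ terms cancel and
\[
\mathrm{Res}_L\bigl(L^2-\nu BL-\nu E,\;L^3+\nu DL+\nu DK\bigr)=\nu^2\bigl(2B^2CD\,\nu^2+c_1\nu+D^2K^2\bigr).
\]
Hence
\[
\tfrac{\gamma^6}{8n^4}\,\mathrm{Res}_y(F_2,G_3/n)=-\gamma^2B^2CD\,n^4+(\cdots)\,n^2-2\gamma^4D^2K^2 ,
\]
which reproduces both $A_4$ and $A_0$ exactly.

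The same constant also checks out at $x=-1$. There $e_1=-2y$ and $e_3=0$, so $\mathrm{Res}_y(F,G/n)=64n^6[(M-\gamma)^2+4M+2\gamma-3]^3$, and the constant maps this to the stated $P(-1)$. This confirms the normalisation. The lemma then follows by hand, with your sign argument unchanged.
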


Every factor is manifestly positive, so the inequality never needs the relation
between $\gamma$ and $M$. Hence $A\to-\infty$ as $n\to\infty$, and every $M$ admits
arbitrarily large odd $n$ with $A<0$. Together with
Lemma~\ref{lem:boundary} this gives $A\cdot P(-1)<0$, equivalently
$A\cdot Q(-2)>0$. That statement is invariant under $P\mapsto cP$, and it is the only
consequence of the two lemmas the argument actually needs. For such $n$,
$Q(z)\to+\infty$ as $z\to-\infty$ while $Q(-2)<0$, so by the intermediate value
theorem there is a real root $z_*<-2$ (Fig.~\ref{fig:cubic}), and
\begin{equation}
x_*=\tfrac12\bigl(z_*\pm\sqrt{z_*^{2}-4}\bigr)\in\R .
\label{eq:xstar}
\end{equation}

\begin{remark}[Explicit threshold in $n$]
\label{rem:threshold}
For the family~\eqref{eq:family} the odd powers of $n$ in $A$ vanish identically,
$A_1=A_3=0$, so $A$ is a \emph{quadratic in $n^{2}$},
\begin{equation}
A(n)=A_4n^{4}+A_2n^{2}+A_0 ,
\end{equation}
with $A_4<0$ by Lemma~\ref{lem:a4} and, in the same normalisation,
\begin{equation}
A_0=-2\gamma^{2}(M+\gamma)^{2}(\gamma+1)^{2}\bigl(M^{2}+3M+2\gamma^{2}\bigr)^{2}<0 .
\end{equation}
So $A<0$ is more than an asymptotic statement. Put $t=n^{2}$; the condition can fail
only between the roots of $A_4t^{2}+A_2t+A_0$, and only when those are real and
positive. Therefore
\begin{equation}
n>\sqrt{t_+},\qquad
t_+=\frac{-A_2-\sqrt{A_2^{2}-4A_4A_0}}{2A_4}
\label{eq:threshold}
\end{equation}
suffices, and ``sufficiently large odd $n$'' may be replaced throughout by the
explicit threshold
$n_0(M)=\min\{n\in2\Z+1:\,n>\sqrt{t_+}\}$, subject only to the finite exceptional set
of Lemma~\ref{lem:finite}. Numerically $\sqrt{t_+}\approx0.76\sqrt N$ over
$4\le N\le1200$, consistent with the small values of $n$ observed in
Sec.~\ref{sec:numerics}. Oddness of $n$ plays no part in this sign argument, since
$A$ cannot see parity; it enters only through the spectral phase
condition~\eqref{eq:spectrum}.
\end{remark}

\begin{figure}[t]
\includegraphics[width=\columnwidth]{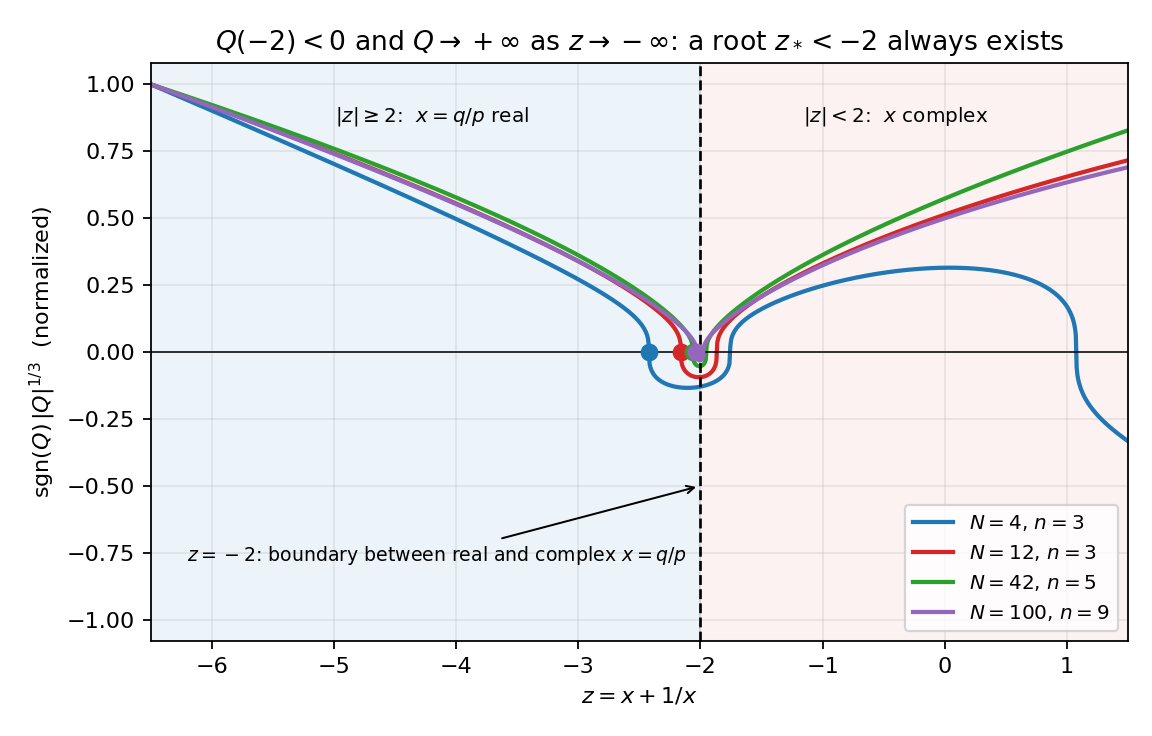}
\caption{\label{fig:cubic}%
The cubic $Q(z)$ for four system sizes, plotted as
$\mathrm{sgn}(Q)\,|Q|^{1/3}$ so that the crossing is visible against the cubic
growth, and normalised to unit maximum on the window shown. The dashed line marks
$z=-2$, the boundary between real and complex $x=q/p$; $x$ is real only in the
shaded regions $|z|\ge2$. In every case $Q(-2)<0$ by Lemma~\ref{lem:boundary}, while
$A<0$, which Lemma~\ref{lem:a4} supplies once $n$ is large enough, sends
$Q(z)\to+\infty$ as $z\to-\infty$. A real root $z_*<-2$ follows, and the markers
show it.}
\end{figure}

\section{Lifting the resultant root}

A root of the resultant is in general only a necessary condition for a common root.
Here that ambiguity can be removed by hand.

\begin{lemma}[Linear lift]
The subresultant chain of $F,G$ with respect to $y$ has degrees $[3,2,1,0]$. The
first nontrivial subresultant is linear, $\alpha y+\beta=0$, so whenever
$\alpha(x_*)\ne0$ the common root is unique and given by
$y_*=-\beta(x_*)/\alpha(x_*)$, a rational function of $x_*$ with real coefficients;
$y_*$ is therefore real whenever $x_*$ is.
\end{lemma}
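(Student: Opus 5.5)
The plan is to determine the $y$-degrees of $F$ and $G$, build the subresultant chain explicitly, and then use the standard gcd property of subresultants to read off the common root.

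\emph{Degrees in $y$.} With $p=1$, $x=q$ and $y=a$, the entries of $\Heff$ under the Stage-I diagonal~\eqref{eq:stage1diag} are affine in $y$. Only $d_1$ and $d_2$ contain $y$, each as $-y$, and the off-diagonal entries $(\Heff)_{12}=(\Heff)_{21}=y$ contain it as $+y$. Since one eigenvalue vanishes, $e_1=\mathrm{tr}\,\Heff$ is linear in $y$, $e_2$ (the sum of principal $2\times2$ minors) is quadratic, and $e_3$ (the sum of principal $3\times3$ minors) is at most cubic.

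\emph{Leading coefficients.} I would confirm that $F=\sigma_2e_1^2-s_1^2e_2$ has $y$-degree exactly $2$ and $G=\sigma_3e_1^3-s_1^3e_3$ has $y$-degree exactly $3$. This means computing the $y^2$ coefficient of $e_2$ and the $y^3$ coefficient of $e_3$ and checking that the resulting leading coefficients of $F$ and $G$ do not vanish identically for the family~\eqref{eq:family}. The degree-$3$ claim should hold since $e_1^3$ carries $y^3$ with coefficient $-8\sigma_3$; I still have to check that the $e_3$ term does not cancel it.

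\emph{The subresultant chain.} With $\deg_yG=3$ and $\deg_yF=2$, the chain consists of $G$, $F$, the first subresultant $S_1=\alpha(x)y+\beta(x)$, and $S_0=\mathrm{Res}_y(F,G)=\gamma^6P(x)$. This gives the degree sequence $[3,2,1,0]$. Each of $\alpha$ and $\beta$ is a determinant of a Sylvester-type submatrix built from the coefficients of $F$ and $G$, so both are polynomials in $x$ with coefficients in $\R[M,\gamma,n]$, hence real for real parameters. I would write $\alpha$ and $\beta$ out symbolically (via the companion notebook) and check that $\alpha\not\equiv0$, which is the regularity needed for the chain to be nondefective.

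\emph{Gcd argument.} Fix $x_*$ with $P(x_*)=0$ and assume the leading coefficients of $F$ and $G$ do not both vanish at $x_*$. Subresultants commute with specialisation, so $\mathrm{Res}_y(F(x_*,\cdot),G(x_*,\cdot))=0$, and $F(x_*,y)$ and $G(x_*,y)$ share a root. By the fundamental theorem of subresultants, the degree of $\gcd_y(F(x_*,\cdot),G(x_*,\cdot))$ is the smallest $j$ with $S_j(x_*)\ne0$. Because $S_0(x_*)=0$ and $\alpha(x_*)\ne0$, that gcd is exactly linear and proportional to $S_1(x_*)$. Its unique root is $y_*=-\beta(x_*)/\alpha(x_*)$, which is real because $x_*\in\R$ by~\eqref{eq:xstar} and $\alpha$, $\beta$ have real coefficients.

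\emph{Main obstacle.} The delicate step is the degenerate case where the leading coefficient of $F$ or $G$ in $y$ vanishes at $x_*$, since specialisation of subresultants can then fail. I would first try to show the leading coefficient of $G$ is a nonzero constant in $x$, which would make specialisation automatic. Failing that, the nondegeneracy becomes a hypothesis in the same spirit as the stated condition $\alpha(x_*)\ne0$, and the exceptional $(M,n)$ are collected into the finite set referenced as Lemma~\ref{lem:finite}. The condition $\alpha(x_*)\ne0$ itself is already assumed in the statement, so it needs no proof.
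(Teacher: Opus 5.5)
Your proposal is correct and matches the paper, which likewise relies on the computed degree sequence $[3,2,1,0]$ together with the standard fact that $S_0(x_*)=0$ and $\alpha(x_*)\ne0$ make the specialised gcd proportional to $S_1(x_*)$, so that $y_*=-\beta(x_*)/\alpha(x_*)$ is the unique common root. The ``main obstacle'' you flag does not arise: $a$ enters $\Heff$ only through the rank-one term $-a\,uu^{T}$ with $u=(0,1,-1,0)^{T}$, so $e_1,e_2,e_3$ are all affine in $y$ (your ``$e_2$ quadratic'' is an overestimate), and the leading coefficients of $F$ and $G$ in $y$ are the nonzero constants $4\sigma_2=-4$ and $-8\sigma_3=-8n$, which makes specialisation at $x_*$ automatic.
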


\begin{lemma}[The exceptional set is finite]
\label{lem:finite}
For the family~\eqref{eq:family} and every $M\ge2$, $\mathrm{Res}_x(P,\alpha)$ is a
polynomial of degree $34$ in $n$ with nonvanishing leading coefficient. Hence, for
fixed $M$, it vanishes for at most $34$ values of $n$.
\end{lemma}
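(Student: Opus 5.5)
The plan is to treat $R(M,n)\equiv\mathrm{Res}_x(P,\alpha)$ as an explicit polynomial in $n$ whose coefficients lie in $\mathbb Q(M,\gamma)$, and then to show that its top coefficient vanishes nowhere on $M\ge2$. The finite count is then immediate: a nonzero univariate polynomial of degree $34$ has at most $34$ roots.

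\emph{Step 1: specialise and bound degrees.} I substitute the family~\eqref{eq:family} into $F$ and $G$. In $F$ and $G$, $n$ enters only through $s_1,\sigma_2,\sigma_3$: $F$ is quadratic in $n$ and $G$ is cubic in $n$. This makes $P$ and the subresultant coefficient $\alpha$ polynomials in $n$ whose degrees in both $n$ and $x$ are controlled. The Sylvester determinant for $\mathrm{Res}_x(P,\alpha)$ then gives an a priori bound: $\deg_n R\le \deg_x\alpha\cdot\deg_nP+6\deg_n\alpha$. I would check that this bound, or the one obtained after the common content in $n$ is removed, equals $34$. As a guide to what to expect, the leading coefficient of $P$ is the quartic $A$ of Lemma~\ref{lem:a4}.

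\emph{Step 2: identify the leading coefficient.} A cleaner route is to extract the top-degree parts $P^{\mathrm{top}}(x)$ and $\alpha^{\mathrm{top}}(x)$ in $n$, that is, to take $n\to\infty$ after rescaling. If neither top part loses degree in $x$, the coefficient of $n^{34}$ in $R$ equals $\mathrm{Res}_x(P^{\mathrm{top}},\alpha^{\mathrm{top}})$, up to powers of the leading $x$-coefficients. For $P$, the $x$-leading coefficient at top order in $n$ is $A_4$, which Lemma~\ref{lem:a4} shows is nonzero. The resultant of the top parts is a polynomial in $(M,\gamma)$, and I would aim to factor it completely, as was done for $A_4$ and $A_0$. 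The hope is that every factor is manifestly positive (or manifestly signed) for $M\ge2$, $\gamma>0$, so that nonvanishing never needs the relation between $\gamma$ and $M$.

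\emph{Step 3: handle stubborn factors.} If some factor is not manifestly signed, I would impose $\gamma^2=2\gamma+ (M+1)(M+2)/2-1$, which is the relation defining the physical $\gamma$, reduce the factor modulo it to the form $u(M)+v(M)\gamma$, and show that $u^2\ne v^2\gamma^2$ as a polynomial identity in $M$ with no integer roots $M\ge2$. Small $M$ can be checked directly.

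\textbf{Main obstacle.} The main obstacle is Step 2. The $n$-leading coefficient may degenerate: the top parts could share a root in $x$, or $\alpha^{\mathrm{top}}$ could drop degree. Either would make the true degree smaller than the naive bound and force a second-order expansion. I would first compute the resultant of the top parts symbolically and factor it. If that resultant vanishes, I would instead compute $R$ fully in the companion notebook and factor its $n^{34}$ coefficient directly over $\mathbb Q(M,\gamma)$.
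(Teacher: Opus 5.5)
Your plan is the paper's own argument: the lemma rests on computing $\mathrm{Res}_x(P,\alpha)$ symbolically, reading off degree $34$ in $n$, and factoring the $n^{34}$ coefficient as $2^{24}\gamma^{42}(M\gamma+2M+\gamma^{2}+3\gamma)^{12}\bigl[(M-\gamma)^{2}+4M+2\gamma-3\bigr]^{6}$, whose factors are all manifestly positive for $M\ge2$, $\gamma>0$, so your Step~3 is never needed. Note, though, that the coefficient comes from your fallback and not from the top-parts shortcut of Step~2: since $\deg_nP=4$, the naive Sylvester bound $2\deg_nP+6\deg_n\alpha\equiv2\pmod 6$ can never equal $34$, so the top-order parts of $P$ and $\alpha$ must have vanishing resultant (they share the root $x=-1$), which is exactly the degeneracy you anticipated.
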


The polynomial $P$ is determined by the elimination only up to a real nonzero factor
$c(M,\gamma)$, and since $\deg_x\alpha=2$ such a rescaling multiplies
$\mathrm{Res}_x(P,\alpha)$ by $c^2>0$. Its degree in $n$, whether it vanishes
identically, whether it vanishes at a given $n$, and the sign of its leading
coefficient are therefore all normalisation-independent, even though the numerical
value of that coefficient is not. For the normalisation in which
$P$ is the primitive degree-six factor of $\mathrm{Res}_y(F,G)/\gamma^6$, the leading
coefficient is
\begin{equation}
2^{24}\gamma^{42}\bigl(M\gamma+2M+\gamma^{2}+3\gamma\bigr)^{12}
\bigl[(M-\gamma)^{2}+4M+2\gamma-3\bigr]^{6}>0 .
\end{equation}

Explicitly: for any $n$ with $\mathrm{Res}_x(P,\alpha)\ne0$, the polynomials $P$ and
$\alpha$ share no root, so no root $x$ of $P$ can satisfy $\alpha(x)=0$. The lift
$y_*=-\beta/\alpha$ is therefore well defined at $x_*$.

Since $A<0$ holds for all sufficiently large odd $n$, Lemma~\ref{lem:finite} excludes at most finitely many $n$, so an odd $n$ exists
satisfying both $A<0$ and $\alpha\ne0$.

\section{Main result}

\begin{theorem}
\label{thm:main}
For every $N\ge4$ there exists a real, time-independent, excitation-number-preserving
Hamiltonian of the form~\eqref{eq:model} with generally dense couplings, invariant
under $S_{N-2}$ acting on the initially unoccupied spins, and a finite time $t$,
such that
\begin{equation}
e^{-iHt}\ket{110\cdots0}=\Dket .
\end{equation}
\end{theorem}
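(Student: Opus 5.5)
The plan is to assemble the lemmas above in order, treating the statement as a consequence of the reduction to $\Heff$ plus the spectral mechanism of Sec.~\ref{sec:stage1}.

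\emph{Choosing the parameters.} Fix $N\ge4$, so $M=N-2\ge2$, and set $\gamma=(1+c)/c$ with $c=\binom N2^{-1/2}$, so $\gamma>0$. By Lemma~\ref{lem:a4} and Remark~\ref{rem:threshold}, every odd $n>\sqrt{t_+}$ gives $A<0$. Lemma~\ref{lem:finite} removes at most $34$ values of $n$. Hence an odd $n$ exists with both $A<0$ and $\mathrm{Res}_x(P,\alpha)\ne0$. For this $n$, Lemma~\ref{lem:boundary} gives $A\cdot Q(-2)>0$. The intermediate value theorem then yields $z_*<-2$ with $Q(z_*)=0$, and \eqref{eq:xstar} gives a real $x_*$ with $P(x_*)=0$. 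Since $\alpha(x_*)\ne0$, the linear lift gives a real $y_*$ with $F=G=0$ at $(x_*,y_*)$, with $p=1$ as permitted by Lemma~\ref{lem:chart}.

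\emph{Building the Hamiltonian.} Take $p=1$, $q=x_*$, $a=y_*$ and the diagonal \eqref{eq:stage1diag}. Proposition~\ref{prop:stage1} makes $x_+$ a zero eigenvector. Since $F=G=0$, the remaining three eigenvalues are proportional to $(-n,-1,1)$ with common factor $e_1/s_1$, which is nonzero by Lemma~\ref{lem:e1}. Rescaling all of $p,q,a$, and hence $H$ by homogeneity, by $s_1\pi/e_1$ makes the spectrum on $x_+^\perp$ exactly $\{-n\pi,-\pi,\pi\}$ while keeping $Hx_+=0$.

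\emph{Identifying the eigenvalues.} One point needs care here. $F=G=0$ fixes the elementary symmetric functions of the four eigenvalues up to scale, and one of the four is $0$. So the characteristic polynomial is $\lambda(\lambda^3-e_1\lambda^2+e_2\lambda-e_3)$, and the nonzero roots are $\kappa(-n,-1,1)$. These are nonzero and distinct from $0$ because $n\ne0$. Thus $0$ is simple, $x_+^\perp$ is spanned by the other eigenvectors, and $e^{-iH}=-\mathbb 1$ there. Eq.~\eqref{eq:transfer} then gives $e^{-iH}e_0=v$ in the invariant subspace.

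\emph{Lifting to the full model.} Lemma~\ref{lem:realisable} converts $(d_0,\dots,d_3)$ into real $\epsilon_0,\epsilon_1,\epsilon_b,w$. Together with $a,p,q$, these define a Hamiltonian of the form \eqref{eq:model} that is $S_{N-2}$-invariant. Its restriction to the span of \eqref{eq:orbits} is $\Heff$, and this span is invariant and contains both $\ket{\psi_0}$ and $\Dket$. Therefore $e^{-iH}\ket{110\cdots0}=\Dket$ holds in the full $\binom N2$-dimensional sector. If a general time $t$ is wanted, replace $H$ by $H/t$.

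\emph{Main obstacle.} I expect the delicate step to be the simultaneous choice of $n$: it must be odd, satisfy $A<0$, and avoid the exceptional set. The apparent worry is that Lemma~\ref{lem:finite} requires the resultant to be not identically zero for each $M$. That is already asserted through its nonvanishing leading coefficient, which is positive for $\gamma>0$, so cofiniteness closes the argument.
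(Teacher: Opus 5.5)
Your proposal is correct and follows essentially the same route as the paper's proof. You choose an odd $n$ with $A<0$ outside the finite exceptional set, get $z_*<-2$ from the intermediate value theorem, lift to a real $(x_*,y_*)$ through the linear subresultant, rescale using $e_1\ne0$, and realise the diagonal via Proposition~\ref{prop:stage1} and Lemma~\ref{lem:realisable}. Your check that $0$ is a simple eigenvalue (the nonzero roots are $\kappa(-n,-1,1)$ with $\kappa=e_1/s_1\ne0$) and your invariant-subspace argument for passing to the full sector spell out steps the paper leaves implicit.
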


The restriction $N\ge4$ arises because the $S_{N-2}$ reduction used here requires
$M=N-2\ge2$; the cases $N=2,3$ are outside the scope of this construction and are
not asserted to be impossible.

\begin{proof}
Set $M=N-2\ge2$. By Lemma~\ref{lem:a4}, $A_4<0$, so $A(n)\to-\infty$ as
$n\to\infty$; choose an odd $n$ with $A<0$, large enough also to avoid the finite
exceptional set of the previous lemma. Lemma~\ref{lem:boundary} gives
$Q(-2)<0$; with $A<0$ the intermediate value theorem gives a real root $z_*<-2$,
hence real $x_*=q/p$, and the linear lift gives real $y_*=a/p$. Rescaling so that $e_1=-n\pi$ fixes $p$; this is legitimate because
$e_1\ne0$ by Lemma~\ref{lem:e1}. Proposition~\ref{prop:stage1} determines the diagonal, and
Lemma~\ref{lem:realisable} converts it to physical $\epsilon_0,\epsilon_1,\epsilon_b,w$.
By construction $Hx_+=0$ and the spectrum on $x_+^\perp$ is $\{-n\pi,-\pi,\pi\}$, so
$e^{-iH}e_0=v$ by~\eqref{eq:transfer}. Undoing the scaling convention $t=1$ recovers
a general transfer time.
\end{proof}

\section{Worked example: $N=6$}
\label{sec:worked}

For $N=6$ ($M=4$) and $n=3$, the construction gives, at $t=1$,
\begin{equation}
\begin{aligned}
J_{01}&=-4.798441263, &
J_{0b}&=\phantom{-}1.257652965,\\
J_{1b}&=\phantom{-}0.521097052, &
J_{bc}&=\phantom{-}0.305202247,\\
\epsilon_0&=-0.040267117, &
\epsilon_1&=-1.419824076,\\
\epsilon_b&=-2.389154510, &&
\end{aligned}
\end{equation}
for all $b\ne c$ in $B=\{2,3,4,5\}$. This solution sits on the $z>+2$ family of Remark~\ref{rem:branch}, at
$z_*=2.82779\ldots$ and $x_*=q/p=0.414341\ldots$. I quote it in place of the $z<-2$
solution simply because its couplings are the smaller of the two. The reduced spectrum is exactly $(-3,-1,0,1)\pi$ and
$\bigl\|e^{-iH}e_0-v\bigr\|=5.2\times10^{-15}$ at full double precision; rebuilding
$H$ from the nine-digit values printed above gives $5.6\times10^{-10}$, the loss being
entirely the truncation. For comparison, the $z<-2$ solution at the same $N$ and $n$
has $J_{01}=1.483076621$, $J_{0b}=1.513472919$, $J_{1b}=-0.909515192$,
$J_{bc}=-3.008713387$, and $(\epsilon_0,\epsilon_1,\epsilon_b)=
(2.021228119,-2.516988260,5.413469046)$. All values are reproduced by the companion
notebook. Note $J_{bc}\ne0$ in both cases: the spectator sites must be coupled to one
another.

\begin{figure}[t]
\centering
\includegraphics[width=\columnwidth]{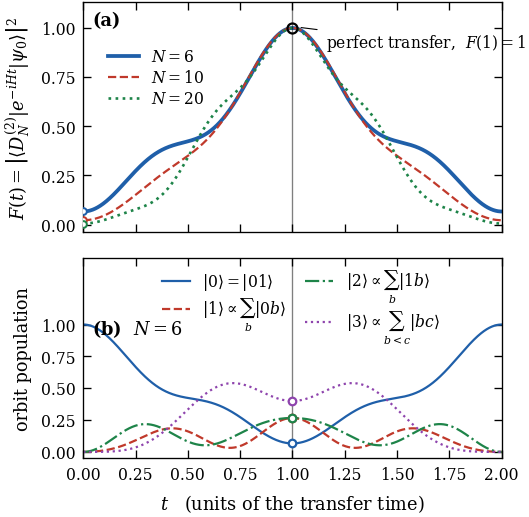}
\caption{\label{fig:dynamics}%
The transfer itself. I evaluate it in the full $\binom N2$-dimensional sector built
from the printed couplings, not in the reduced block, so that the reduction is tested
alongside the construction.
\emph{(a)}~$F(t)=|\braket{D_N^{(2)}|e^{-iHt}|\psi_0}|^2$ at $N=6,10,20$, where the
sector has dimension $15$, $45$ and $190$. At $t=0$ the fidelity is the bare overlap
$\binom N2^{-1}$; at $t=1$ it is $1$ to better than $10^{-12}$ in every case. Since
$e^{-iH}$ is $+1$ on $x_+$ and $-1$ on $x_+^\perp$, $e^{-2iH}=\mathbb 1$ and the
evolution is exactly $2$-periodic. \emph{(b)}~The four orbit populations at $N=6$.
Circles mark the Dicke amplitudes $(1,M,M,\binom M2)/\binom N2=(1,4,4,6)/15$, which
the four orbits reach together at $t=1$ and at no earlier time.}
\end{figure}

\section{Numerical validation}
\label{sec:numerics}

The proof is symbolic, so it deserves a check that shares nothing with it. I follow
the constructive route of Theorem~\ref{thm:main} at every $N$ from $4$ to $102$: for each $M=N-2$ the
smallest odd $n$ with $A<0$ is selected, a real root $x_*$ of $P$ is computed, $y_*=-\beta/\alpha$ is obtained from the linear subresultant, the couplings
are rescaled so that $e_1=-n\pi$, and the resulting Hamiltonian is diagonalised and
propagated. All $99$ cases succeed, with
\begin{equation}
\bigl\|e^{-iH}e_0-v\bigr\|\le1.1\times10^{-14},\qquad
\bigl\|Hx_+\bigr\|\le1.0\times10^{-15},
\end{equation}
and reduced spectra equal to $(-n,-1,0,1)\pi$ to within $10^{-14}$
(Fig.~\ref{fig:resid}). Theorem~\ref{thm:main} asks for an odd $n$ with both $A<0$ and $\alpha\ne0$. The
search imposes only the first condition and then tests the lift directly; in all $99$
cases the smallest odd $n$ with $A<0$ already missed the exceptional set, so $n$ never
had to be enlarged. Across this range the smallest admissible odd $n$ takes only
the values $3,5,7,9$ (Fig.~\ref{fig:scaling}).

\begin{figure}[t]
\includegraphics[width=\columnwidth]{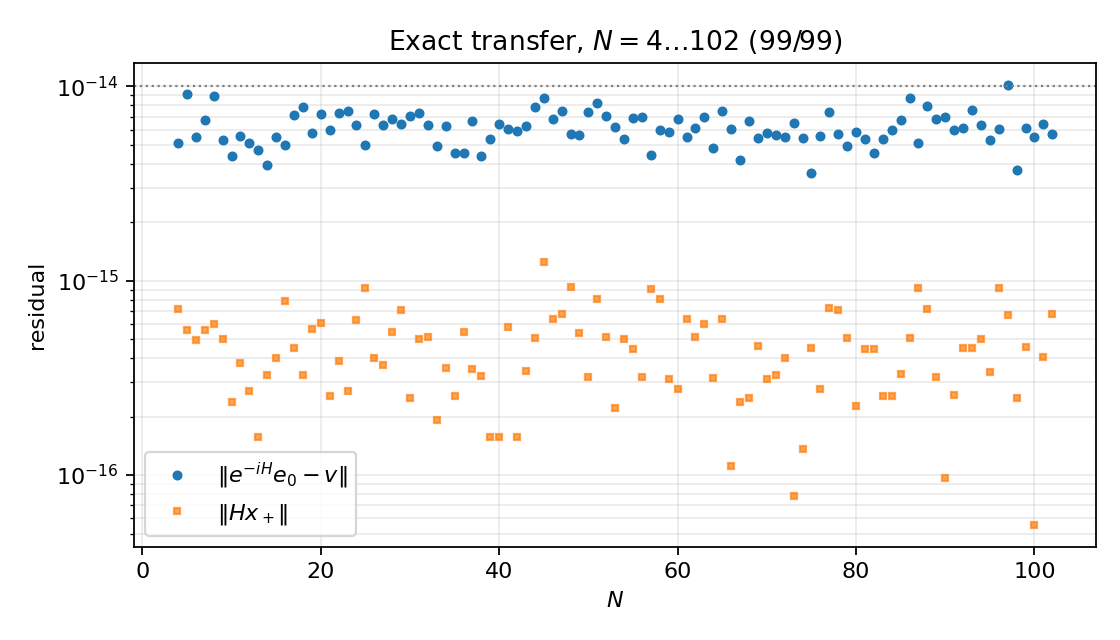}
\caption{\label{fig:resid}%
Transfer residual $\|e^{-iH}e_0-v\|$ and eigenvector residual $\|Hx_+\|$ for the
Hamiltonian produced by the construction of Theorem~\ref{thm:main}, for every
$N=4,\dots,102$. All $99$ cases succeed at machine precision. These are an
independent check of a symbolic construction, not evidence for the theorem.}
\end{figure} The same procedure
was also run at $N=502$, where $n=19$ and the residual is $9.9\times10^{-15}$. No
optimisation over Hamiltonians is performed at any point.

Theorem~\ref{thm:main} rests on three symbolic identities: the factorisation of
$P(-1)$ in Lemma~\ref{lem:boundary}, that of $A_4$ in Lemma~\ref{lem:a4}, and the
degree and leading coefficient of $\mathrm{Res}_x(P,\alpha)$ in
Lemma~\ref{lem:finite}. Each has been derived twice by independent means, once
through computer algebra and once through exact rational arithmetic using
hand-implemented polynomial and resultant routines that share no code with the
first. Both derivations, and every number quoted in this paper, are
reproduced by the companion notebook.

No numerical optimisation enters the construction, and this is not incidental.
Direct fidelity optimisation suits this problem badly: the exact-transfer solutions
form a thin algebraic subvariety of parameter space, and a gradient search can stall a
long way from an exact solution even where one exists. A control experiment on a
$20$-dimensional sector makes the point. Targets built to be exactly reachable were
recovered only to $\|\,\cdot\,\|\sim10^{-4}$, which is nowhere near enough to tell a
reachable target from an apparently unreachable one, so a residual at that level says
nothing either way. Everything reported here comes instead from exact polynomial
elimination, with propagation used only to verify it.

\section{Resource scaling and error sensitivity}
\label{sec:scaling}

Theorem~\ref{thm:main} asserts existence and stops there; it says nothing about the
price of the construction as $N$ grows. Since the couplings come out constructively,
that price can simply be measured. Table~\ref{tab:scaling} records it along the branch
the proof uses, taking at each $N$ the smallest odd $n$ with $A<0$.

\begin{table*}[t]
\caption{\label{tab:scaling}%
Resource scaling of the constructed Hamiltonians. $n$ is the smallest odd integer
with $A<0$; $\max|J|$ is over the four coupling classes $a,p,q,w$; the on-site spread
is $\max_i\epsilon_i-\min_i\epsilon_i$, which is the gauge-invariant quantity since a
uniform shift of all $\epsilon_i$ acts as a global phase in a fixed-excitation sector.
$\Delta E$ is the energy standard deviation in the initial state $\ket{\psi_0}$ and
$\theta_N=\arccos\binom N2^{-1/2}$ is the Mandelstam--Tamm angle; at $t=1$ the bound
reads $\Delta E\ge\theta_N$, so the last column is the factor by which the transfer
time exceeds the Mandelstam--Tamm limit.}
\begin{ruledtabular}
\begin{tabular}{rrrrrrr}
$N$ & $n$ & $\max|J|$ & $\max\epsilon-\min\epsilon$ & $\Delta E$ & $\theta_N$ & $\Delta E/\theta_N$\\
\colrule
   4 &  3 &  2.128 &      8.0 & 3.403 & 1.1503 & 2.959\\
   6 &  3 &  3.009 &      7.9 & 3.531 & 1.3096 & 2.697\\
  10 &  3 &  3.637 &     22.3 & 3.501 & 1.4212 & 2.463\\
  20 &  5 &  7.243 &    117.4 & 4.290 & 1.4982 & 2.863\\
  40 &  5 &  7.486 &    273.1 & 3.922 & 1.5350 & 2.555\\
  80 &  7 & 10.765 &    824.6 & 4.149 & 1.5530 & 2.671\\
 102 &  9 & 13.941 &   1374.6 & 4.520 & 1.5569 & 2.903\\
 200 & 11 & 17.146 &   3372.5 & 4.340 & 1.5637 & 2.776\\
 400 & 17 & 26.614 &  10557.8 & 4.645 & 1.5673 & 2.964\\
 800 & 23 & 36.066 &  28734.9 & 4.581 & 1.5690 & 2.920\\
1200 & 27 & 42.361 &  50695.3 & 4.491 & 1.5696 & 2.861\\
\end{tabular}
\end{ruledtabular}
\end{table*}

\begin{figure*}[t]
\centering
\includegraphics[width=\textwidth]{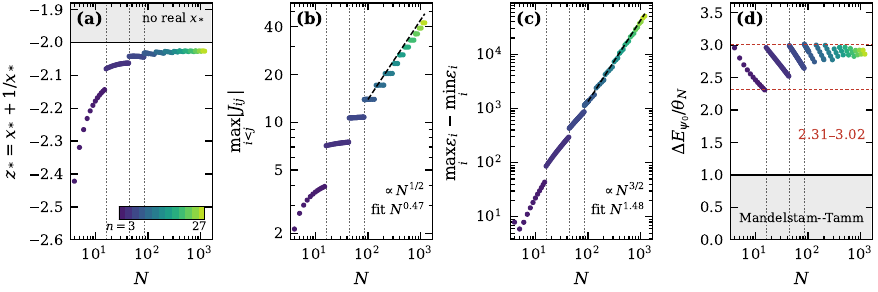}
\caption{\label{fig:scaling}%
The construction run at every $N$ from $4$ to $102$ and then on a geometric grid out
to $N=1200$, all on the $z<-2$ branch. Colour encodes the spectral integer $n$, and
the dotted verticals mark the three places below $N=102$ where $n$ steps; those steps
are where the jumps in (b) and (c) come from. \emph{(a)}~The root $z_*$ stays below
$-2$ throughout, as Sec.~\ref{sec:exist} requires, approaching the boundary from
below as $N$ grows. Above $-2$ no real $x_*$ exists. \emph{(b)}~The largest coupling
and \emph{(c)}~the on-site spread, against $N^{1/2}$ and $N^{3/2}$ guides. The gap
between these two exponents is what Sec.~\ref{sec:scaling} is about: the couplings
stay mild, the detunings do not. \emph{(d)}~The Mandelstam--Tamm ratio
$\Delta E_{\psi_0}/\theta_N$, which the bound keeps at or above $1$. It falls across
each plateau of fixed $n$ and jumps back at every step, staying between $2.31$ and
$3.02$ over the whole range.}
\end{figure*}

Three things happen at once here, and they pull in different directions.

\paragraph{The couplings grow slowly.}
Fitting a power law to the sweep of Fig.~\ref{fig:scaling} over $N\ge100$ gives
$n\sim N^{0.46}$ and $\max|J|\sim N^{0.47}$. The exponents are window-dependent at
the second decimal, since $n$ is a step function of $N$ and every step displaces the
whole curve: restricting the fit to the rows of Table~\ref{tab:scaling} alone returns
$0.51$ for $\max|J|$. Read $\sqrt N$ into them and nothing finer. The spectral
spread on the four-dimensional sector is $n\pi$ by construction, so it too grows as
$\sqrt N$, consistent with Remark~\ref{rem:threshold} and its prediction
$n_0\approx0.76\sqrt N$.

\paragraph{The on-site range grows much faster.}
The same fit gives $\max\epsilon-\min\epsilon\sim N^{1.48}$, so the ratio of on-site
spread to exchange coupling climbs roughly linearly in $N$: the Hamiltonians become
steadily more \emph{detuned}. This is the dominant static resource requirement, well
ahead of coupling strength, and it is what would stand in the way of an
implementation, since it sets both the dynamic range and the absolute calibration
accuracy demanded of the on-site control.

\paragraph{The transfer time remains near the speed limit.}
Since $|\braket{\psi_0|D_N^{(2)}}|=\binom N2^{-1/2}\to0$, the rotation angle
$\theta_N\to\pi/2$, and the Mandelstam--Tamm bound
$t\,\Delta E_{\psi_0}\ge\theta_N$ forces $\Delta E\gtrsim\pi/2$ at $t=1$ however the
Hamiltonian is chosen. The measured $\Delta E$ stays between $3.4$ and $4.7$ across
$4\le N\le1200$ (and equals $4.51$ at $N=5000$), so
\begin{equation}
\frac{t}{t_{\mathrm{MT}}}=\frac{\Delta E_{\psi_0}}{\theta_N}\in[2.31,3.02]
\end{equation}
over that whole range. The ratio is not monotone: within each plateau of fixed $n$ it
falls, and at each step of $n$ it jumps back up (Fig.~\ref{fig:scaling}(d)). The
extremes quoted are attained at $N=15$, the last size before the step to $n=5$, and at
$N=85$, just before the step to $n=9$; the sparse sizes of Table~\ref{tab:scaling}
straddle the first of these and do not show it. So the transfer runs within an $O(1)$ factor of the Mandelstam--Tamm lower bound at
every size, even as the on-site range of the Hamiltonian grows. There is no tension
between the two: large on-site energies are a static resource requirement, whereas the
bound constrains $\Delta E_{\psi_0}$, and those are different quantities.

\subsection{Correlated versus independent disorder}
\label{sec:disorder}

Fixing seven parameters exactly invites the obvious objection: a measure-zero
solution ought to be fragile. Perturbing it tells a more structured story. What
governs the damage is not the size of an error so much as whether it is correlated
across bonds.

We compare two ensembles of relative Gaussian perturbations at the same strength
$\delta$:
\begin{itemize}
\item[(A)] \emph{$S_M$-preserving}: one draw per coupling \emph{class}, applied
uniformly to all bonds of that class --- systematic calibration drift;
\item[(B)] \emph{$S_M$-breaking}: an independent draw for every individual bond
$J_{ij}$ and every site energy $\epsilon_i$ --- microscopic fabrication scatter.
\end{itemize}
Table~\ref{tab:disorder} gives the median transfer fidelity over $40$ realisations.

\begin{table}[b]
\caption{\label{tab:disorder}%
Median transfer fidelity $|\braket{D_N^{(2)}|e^{-iH}|\psi_0}|^2$ over $40$
realisations of relative Gaussian disorder of strength $\delta$, for the
$S_M$-preserving ensemble (A) and the $S_M$-breaking ensemble (B).}
\begin{ruledtabular}
\begin{tabular}{rrrrrrr}
 & \multicolumn{3}{c}{(A) $S_M$-preserving} & \multicolumn{3}{c}{(B) $S_M$-breaking}\\
\colrule
$N$ & $1\%$ & $3\%$ & $10\%$ & $1\%$ & $3\%$ & $10\%$\\
\colrule
 6 & 0.9988 & 0.9880 & 0.8885 & 0.9997 & 0.9967 & 0.9541\\
10 & 0.9920 & 0.9468 & 0.5912 & 0.9997 & 0.9961 & 0.9538\\
20 & 0.8239 & 0.2887 & 0.0322 & 0.9985 & 0.9865 & 0.8480\\
\end{tabular}
\end{ruledtabular}
\end{table}

\begin{figure}[t]
\centering
\includegraphics[width=\columnwidth]{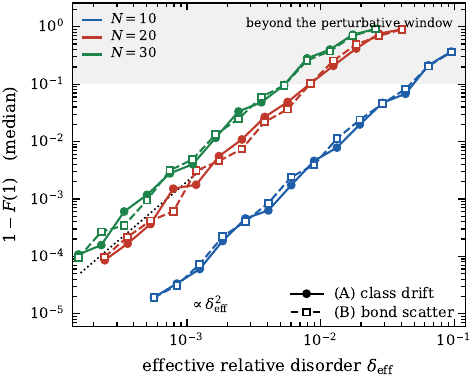}
\caption{\label{fig:disorder}%
Median infidelity under relative Gaussian disorder confined to the
spectator--spectator class $w$, for the two ensembles of Sec.~\ref{sec:disorder}: one
draw per class (A, filled circles) and one draw per bond (B, open squares), with
$200$ realisations behind each point. Ensemble B is plotted against
$\delta_{\mathrm{eff}}=\delta/\sqrt{\binom M2}$ and ensemble A against $\delta$
itself, so~\eqref{eq:collapse} predicts that the two should land on top of each
other. They do, over three decades and at all three sizes, with the slope of $2$ a
perturbative unitary error demands. Inside the shaded band that picture has broken
down and the ensembles come apart, which is the first qualification discussed below.}
\end{figure}

The ordering is the opposite of what the measure-zero objection anticipates:
\emph{breaking} the symmetry at random is the benign direction, and the gap widens
with $N$. Resolving the perturbation by coupling class locates the cause. At
$\delta=1\%$ and $N=30$, disorder confined to $a$ gives fidelity $0.99996$, to $p$
gives $0.9997$ (A) and $1.0000$ (B), to $q$ gives $0.9995$ (A) and $1.0000$ (B),
while disorder confined to the spectator--spectator coupling $w$ gives $0.4828$ (A)
against $0.9987$ (B). Essentially the entire sensitivity is a systematic error on
$w$.

The mechanism is visible in~\eqref{eq:diag}: $w$ enters the reduced block only
through the diagonal, and it arrives there with multiplicity: $(M-1)w$ in $d_1$ and
$d_2$, and $2(M-2)w$ in $d_3$. A class-wide relative error $\delta$ is therefore accumulated
coherently, $\propto M\delta$, whereas independent errors on the $\binom M2$
individual bonds $J_{bc}$ enter only through their sample mean. This predicts a
quantitative relation between the two ensembles: bond scatter of strength $\delta$
should act like a systematic class error of strength
\begin{equation}
\delta_{\mathrm{eff}}=\frac{\delta}{\sqrt{\binom M2}}
=\delta\,\sqrt{\frac{2}{M(M-1)}}\;\sim\;\frac{\sqrt2\,\delta}{M} .
\label{eq:collapse}
\end{equation}
The prediction holds. For $w$-only disorder at $N=20$ ($\sqrt{\binom M2}=12.37$) the
ensemble-(B) fidelities at $\delta=0.01,0.03,0.10$ are $0.99902$, $0.99363$,
$0.91073$, against ensemble-(A) fidelities at $\delta_{\mathrm{eff}}$ of $0.99917$,
$0.99311$, $0.89383$; at $N=30$ ($\sqrt{\binom M2}=19.44$) the pairs are
$0.99903/0.99916$ and $0.99362/0.99011$. The infidelity is quadratic in $\delta$, with
fitted exponents $1.82$, $1.98$ and $2.01$ at $N=10,20,30$, as expected for a
perturbative unitary error. We do not fit an exponent in $M$: the smallest admissible
odd $n$ is a step function of $N$, and that discreteness contaminates any such fit.

Two qualifications go with this. First, the collapse~\eqref{eq:collapse} is
perturbative, and outside that window it errs the wrong way: at $\delta=0.3$ and
$N=20$ the ensemble-(B) fidelity is $0.288$ against a rescaled ensemble-(A)
prediction of $0.459$, so the self-averaging estimate \emph{under}-predicts the
damage. Second, every comparison here is internal to this construction. I make no
claim about how it fares against other engineered PST schemes; settling that would
need a common disorder model and a common fidelity metric.

None of this is a robustness claim. It is a statement about which errors matter:
\begin{equation}
\text{systematic class drift: }\delta_{\mathrm{eff}}\sim\delta,
\qquad
\text{bond scatter: }\delta_{\mathrm{eff}}\sim\delta/M .
\end{equation}
Exact $S_M$ symmetry is thus an analytical device for the reduction, and not a demand
that every physical spectator bond be identical. Zero-mean bond-to-bond variation gets
suppressed by the spectator multiplicity. What has to be held accurately is the mean
value of the spectator--spectator coupling.

\section{Relation to existing theory}
\label{sec:krawtchouk}

\paragraph{Spectral PST.}
Strong cospectrality and the associated spectral projection conditions are
established concepts~\cite{Godsil2012,Coutinho2014}, and PST between
general real pure states has been characterised in that
language~\cite{GKM2025}. This framework is used here as machinery; no new spectral
criterion is claimed.

Reference~\cite{GKM2025} also handles constrained Hamiltonians, namely adjacency and
Laplacian matrices of graphs, and classifies which real pure states admit PST on
complete graphs among others. That classification does not cover the present result.
There the Hamiltonian is an adjacency or Laplacian matrix on the vertex set itself,
and the claim is that \emph{some} pair of real pure states admits PST on a given
connected graph. Here the operator is a two-excitation sector Hamiltonian for $N$
spins, its entries are tied together by the ratio
constraints~\eqref{eq:constraints}, and the pair of states is fixed in advance. What
I add is the realisation of such a transfer inside the constrained
family~\eqref{eq:model} with $S_{N-2}$ symmetry.

\paragraph{Dicke-state preparation.}
Existing routes to $\ket{D_N^{(k)}}$ fall into three families, none of which is a
static-Hamiltonian transfer. Circuit constructions prepare the state with
$O(kn)$ gates and no ancillas~\cite{Bartschi2019,Bartschi2022}, including
divide-and-conquer variants adapted to sparse hardware
connectivity~\cite{Aktar2022}. Measurement-and-feedback schemes reach
$\ket{D_N^{(k)}}$ from a product state using global rotations and collective $J_z$
measurements in polylogarithmic depth~\cite{Yu2024}. Physical schemes based on
dissipation, adiabatic passage, or time-dependent global control prepare Dicke states
in spin and atomic ensembles~\cite{Luo2011,Stojanovic2023}. Each of these needs a gate sequence, a projective measurement, or time-dependent
control. This construction needs none of them. The Hamiltonian is fixed, the evolution
is free, and the state simply appears at a prescribed time.

\paragraph{Krawtchouk and $\mathrm{SU}(2)$ chains.}
Engineered PST chains with couplings $J_k\propto\sqrt{k(N-k)}$ are a standard
precedent~\cite{Christandl2004}. The reduced matrix~\eqref{eq:Heff} contains
$\sqrt M$ and $\sqrt{2(M-1)}$ and may superficially resemble these. The state geometry is different, though. A Krawtchouk chain keeps the relevant
states inside a permutation-symmetric sector; here $\ket{110\cdots0}$ is localised and
unsymmetric while $\Dket$ is fully symmetric. Only $S_{N-2}$ survives, and the
four-dimensional problem it leaves behind still distinguishes the two occupied sites,
which Sec.~II showed to be indispensable.

\paragraph{Long-range couplings.}
Long-range extended XY interactions have been shown to speed up site-to-site
transfer relative to the short-range case, reducing the time needed to exceed the
classical fidelity threshold~\cite{Ahuja2026}. This construction also rests on a non-local interaction graph, with the spectator
block internally coupled and no locality imposed anywhere. The task is different,
however: exact transfer to a symmetric multipartite target, not transfer between two
sites.

\section{Scope and limitations}
\label{sec:scope}

\begin{enumerate}
\item \textbf{Two excitations only.} Nothing here establishes the corresponding
result for $\ket{D_N^{(k)}}$ with $k\ge3$, and the mechanism does not extend. The
$S_{N-k}$ orbits of $k$-subsets are labelled by $S\subseteq A$ subject to
$k-|S|\le N-k$, so their number is $\sum_{j\ge\max(0,2k-N)}\binom kj$, which equals
$2^k$ exactly when $N\ge2k$. The diagonal of the reduced matrix is set by the $k+2$
parameters $\epsilon_0,\dots,\epsilon_{k-1}$, $\epsilon_b$ and $w$: writing
$r=k-|S|$ and $M=N-k$, the orbit labelled by $S$ has diagonal entry
$\sum_{i\in S}\epsilon_i+r\,\epsilon_b+r(M-r)\,w$, the last factor counting the
spectator hops available from any state in the orbit. The corresponding linear map has
rank exactly $k+2$ whenever the orbit count exceeds $k+2$, so for $N\ge2k$ it is onto
precisely when $2^k\le k+2$. Stage-I freedom means the diagonal absorbing $Hx_+=0$
unaided, which is what leaves $p,q,a$ free for the spectral problem; for $N\ge2k$ it
therefore holds exactly when $k\le2$, with $k=2$ the saturating case $4=4$. (It survives in the isolated edge case $N=k+1$, where the orbit count collapses to
$k+1$; by particle--hole complement that case is a single-hole problem, and so is
qualitatively different from the generic regime $N\ge2k$.) The downstream equation count grows likewise: the family carries
$\binom k2+2k+2$ parameters against $2^{k+1}-2$ raw equations ($2^k$ from
$Hx_+=0$, and $2^k-2$ from fixing the remaining $2^k-1$ eigenvalues up to an overall
scale), a shortfall of
$3,14,40$ at $k=3,4,5$. I do not claim these equations are independent, so read the count as an indication
and not as a dimensional obstruction. Both statements block the architecture used
here, meaning the $S_{N-k}$ reduction together with the $x_+$ zero-mode condition.
Neither is a no-go for static-Hamiltonian generation of higher-excitation Dicke
states: a different symmetry group, or a construction that never routes through
$x_+$, is under no obligation to obey them.

A second ingredient, independent of the first, would also have to be replaced.
Dropping the sextic to a cubic is no algebraic accident; it is the image of a physical
symmetry. Exchanging the two initially occupied sites sends $p\leftrightarrow q$
by~\eqref{eq:exchange}, therefore $x\leftrightarrow x^{-1}$, therefore the palindromic
$P(x)=x^{6}P(1/x)$, and $z=x+x^{-1}$ is precisely the invariant of that $\Z_2$ action.
At $k\ge3$ the occupied block carries a full $S_k$ action instead of a two-element
exchange. Nothing in the present argument determines the corresponding invariant
parametrisation of the coupling ratios, or whether it buys a comparable drop in
degree. Identifying the $S_k$-invariant replacement for
$z=x+x^{-1}$ is therefore a concrete question separate from the Stage-I counting
above, and both would have to be answered for the method to carry over.
\item \textbf{Hamiltonian engineering, not distribution.} The construction assumes
globally engineered couplings and on-site energies, and so does not constitute an
LOCC entanglement-distribution protocol. Distribution of Dicke states over quantum
networks is a separate and well-studied problem, addressed in the optical setting by
multipartite twin-field interference with heralded detection~\cite{Roga2023} and by
deterministic schemes over collective-noise channels~\cite{Wang2016}; these use
quantum channels, not local operations alone. The present work concerns state
generation by fixed spin-network dynamics. In passing, $\ket{D_N^{(2)}}$ has Schmidt rank at most $3$ across every bipartition,
so what is hard here is dynamical and has nothing to do with the entanglement rank of
the target.
\item \textbf{Static resource versus error sensitivity.} These are distinct and
scale differently (Sec.~\ref{sec:scaling}). The static requirement is on-site dynamic range, growing as $N^{1.48}$ against
couplings that grow only as $N^{0.47}$, and that is the principal practical
limitation. The error requirement is a separate matter. The dominant sensitivity I
tested is systematic drift in the spectator--spectator coupling class, whose tolerance
tightens as $1/M$, whereas independent bond-to-bond scatter is suppressed as $1/M$ and
therefore becomes \emph{more} tolerable as $N$ grows. A parameter being large does not
by itself make its fabrication scatter dangerous, and the two requirements should not
be run together. Neither statement has been tested outside the
perturbative regime or against other PST constructions.

\item \textbf{No optimality.} I do not claim the seven-parameter family, the
$S_{N-2}$ symmetry, or the spectral family $(-n,-1,1)$ is minimal or optimal.
\item \textbf{No minimal-time claim.} $t=1$ is a scaling convention.
\item \textbf{No claim for general targets.} Unconstrained real-state PST is already
known; the contribution here is the constrained realisation.
\end{enumerate}

\section{Conclusion}

I have constructed a static, real, excitation-preserving spin-network Hamiltonian
that perfectly transfers a localised two-excitation state to the two-excitation Dicke
state for every $N\ge4$. The construction proceeds by $S_{N-2}$ reduction to a
four-dimensional space, an unconstrained Stage-I choice making
$\tfrac12(e_0+v)$ a zero eigenvector, and an inverse spectral problem that reduces to
a reciprocal sextic and thence to a cubic. Existence follows from the sign of a single
factored coefficient together with a boundary evaluation, with no numerical input;
because that coefficient turns out to be a quadratic in $n^{2}$, the choice of $n$
comes with an explicit threshold rather than merely an asymptotic guarantee
(Remark~\ref{rem:threshold}).

What holds for every $N\ge4$ is \emph{existence}, and with it an algorithm: solve a
cubic, then lift through a linear subresultant. That is not a closed-form family of
couplings $J_{ij}(N)$, and I claim no such form. Costing the route shows the couplings
and the spectral spread growing only as $\sqrt N$ while the on-site range grows as
$N^{1.48}$, the transfer time staying within roughly a factor of three of the
Mandelstam--Tamm bound at every size, and the error sensitivity concentrated in
systematic drift of a single
coupling class rather than in microscopic disorder, which self-averages
(Sec.~\ref{sec:scaling}). Whether a different architecture extends the result to
$k\ge3$ remains open; Sec.~\ref{sec:scope} identifies two separate obstructions that such an
architecture would have to circumvent.

\begin{acknowledgments}
The author thanks Tushar (Max Planck Institute of Quantum Optics) for valuable
discussions. All results and conclusions are solely the responsibility of the
author.
\end{acknowledgments}

\end{document}